\newtheorem{problem}{Problem}
\newtheorem{theorem}{Theorem}
\newtheorem{remark}{Remark}
\def\newsym #1[#2]#3{\expandafter\newcommand\csname sym:#1\endcsname[#2]{#3}}
\def\s #1{\csname sym:#1\endcsname}
\DeclareMathOperator{\union}{\bigcup} 
\newcommand{\trans}{\ensuremath{\mathsf{T}}} 
\newcommand{\Call}[2]{{\sc#1}(#2)} 
\begin{document}
\title{Failing with Grace: Learning Neural Network Controllers
  that are Boundedly Unsafe}
\author{Panagiotis~Vlantis, Leila J. Bridgeman, and Michael M. Zavlanos
\thanks{*This work is supported in part by AFOSR under award \#FA9550-19-1-0169, by ONR under agreement \#N00014-18-1-2374, and by NSF under award CNS-1932011.

Panagiotis~Vlantis, Leila J. Bridgeman, and Michael M. Zavlanos are with the Department of Mechanical
Engineering and Materials Science, Duke University, Durham, NC, USA.
Email: \{panagiotis.vlantis, leila.bridgeman, michael.zavlanos\}@duke.edu}%
}

\maketitle

\begin{abstract}%
 In this work,
we consider the problem of learning a feed-forward neural network controller
to safely steer an arbitrarily shaped planar robot in
a compact and obstacle-occluded workspace.
Unlike existing methods that depend strongly on the density of data points
close to the boundary of the safe state space to
train neural network controllers with closed-loop safety guarantees,
here we propose an alternative approach that lifts such strong assumptions on
the data that are hard to satisfy in practice and
instead allows for {\em graceful} safety violations, i.e., of a bounded magnitude that can be spatially controlled.
To do so,
we employ reachability analysis techniques to
encapsulate safety constraints in the training process.
Specifically,
to obtain a computationally efficient over-approximation of
the forward reachable set of the closed-loop system, 
we partition the robot's state space into cells and adaptively subdivide
the cells that contain states which may escape the safe set
under the trained control law.
%
%
Then,
using the overlap between each cell's forward reachable set and
the set of infeasible robot configurations as a measure for safety violations,
we introduce appropriate terms into the loss function that
penalize this overlap in the training process.
As a result,
our method can learn a safe vector field for the closed-loop system
and,
at the same time, provide worst-case bounds on
safety violation over the whole configuration space,
defined by the overlap between
the over-approximation of the forward reachable set of the closed-loop system
and
the set of unsafe states.
Moreover,
it can control the tradeoff between
computational complexity and tightness of these bounds.
Our proposed method is supported by both theoretical results and simulation studies.


\end{abstract}

\begin{IEEEkeywords}%
  Safe learning, neural network control, reachability analysis.%
\end{IEEEkeywords}

\section{Introduction}%
\label{sec/intro}

In recent years,
progress in the field of machine learning has furnished
a new family of neural network controllers for robotic systems
that have significantly simplified the overall design process.
Robot navigation is one application
where neural network controllers have been successfully employed for
steering a variety of robots in static and dynamic environments~%
\cite{21113460,47372225,65705677,13285740,80553527}.
%
%
As such control schemes become more common in real-world applications,
the ability to train neural networks with safety considerations becomes
a necessity.

The design of reliable data-driven controllers
that result in safe and adaptable closed-loop systems has
typically relied on methods that couple
state-of-the-art machine learning algorithms
with control~%
\cite{83024768,11417186}.
A popular approach that falls in this class of methods employs
control barrier functions to appropriately constrain
the control inputs computed using data-driven techniques
so that a specified safe subset of the state space
remains invariant during execution and learning.
For example,
\cite{48876638} employs control barrier functions to
safeguard the exploration and deployment of
a reinforcement learning policy trained to achieve
high-level temporal specifications.
Similarly,
\cite{74521314} pairs
adaptive model learning with barrier certificates for
systems with possibly non-stationary agent dynamics.
Safety during learning is accomplished in~\cite{01373512}
by combining model-free reinforcement learning with barrier certificates and using 
Gaussian Processes to model the systems dynamics.
Although barrier certificates constitute
an intuitive tool for ensuring safety constraints,
designing appropriate barrier functions for
robotic systems operating in complex environments
is generally a hard problem. As a result, 
simple but rather conservative certificates are typically used in practice.
Additionally,
conflicts between the reference control laws and the barrier certificates
may introduce unwanted equilibria to the closed-loop system.
A method to address this problem is proposed in~\cite{18673424,46036224}
that learns control barrier functions from expert demonstrations. However, this method requires dense enough sampled data as well as the Liptchitz constants of
the system's dynamics and corresponding neural network controller that are hard to obtain in practice.

Compared to the control barrier function methods discussed above that can usually only ensure invariance of a conservative subset of the set of safe states, backwards reachability methods can instead compute the exact set of safe states which, similar to control barrier function methods, can then be rendered invariant by an appropriate design of controllers that take over when the system approaches the boundary of that safe set. However, exact computation of the safe set using backwards reachability methods generally comes at the expense of higher computational cost. Such a method is presented in~\cite{52351685} that employs Hamilton-Jacobi reachability to
compute the set of safe states for which control inputs exist
that prevent the system from violating safety specifications.
Then, a supervisory control policy is proposed that
feeds the output of a nominal neural network controller to the system
when the state lies inside the safe set
and
switches to a fail-safe optimal control law that can drive the system's state back to the safe set, when
the nominal controller drives the system outside the safe set.
This supervisory control law is extended in~\cite{33804135}
to address model uncertainty by
learning bounds on the unknown disturbances online and utilizing them
to derive probabilistic guarantees on safety.
In~\cite{73430503},
Hamilton-Jacobi reachability is also used to
generate safe and optimal trajectories that are
used as data to train convolutional neural networks
that yield waypoints which can be tracked robustly by
the robot equipped with a conventional feedback controller.

Common in the methods discussed above is that they generally apply reachability analysis on the open loop dynamics
without
considering the specific structure of
the reference neural network controller, owning to its complexity.
Reachability analysis of neural networks is an actively studied topic
and recent solutions have been proposed for
verification~%
\cite{75421458,88568115,77475528}
and
robust training~%
\cite{18463457,68156223}
alike.
These methods have been successfully adapted for
estimating the forward reachable set of dynamical systems
in feedback interconnection with feed-forward neural network controllers.
For example,
in~\cite{40700408,81450036},
Taylor models are used to compute an over-approximation of
the closed-loop system's reachable set by constructing
a polynomial approximation of the neural network controller.
An alternative approach to this problem is proposed in~\cite{24785682}
that translates activation functions into quadratic constraints
and
formulates a semi-definite program in order to compute
an over-approximation of forward reachable sets of the closed-loop system.
Over-approximations of the forward reachable set can also be computed using
interval subdivision methods as shown in~\cite{50517380}.
In this work,
the robot's state space is partitioned into
smaller cells with specified size
and
the controller's output set is approximated by
the union of each cell's output reachable set over-approximation.
To reduce the fineness of the partition and, effectively,
minimize computational cost,
heuristic methods have been developed in~\cite{32783264,03272408}.
%
%
Although the methods discussed above provide accurate over-approximations of
the reachable set of  neural network controllers,
they only address the verification problem of already trained controllers
and
do not consider safety specifications.
%
%
%

In this work,
we consider the problem of training a neural network controller
in feedback interconnection with a polygonal shaped robot,
able to translate and rotate within a compact, obstacle-occluded workspace,
such that
the closed-loop system safely reaches a specified goal configuration
and
possible safety violations are explicitly bounded.
Specifically,
given a dataset of state and input pairs
sampled from safe robot trajectories,
we first train a feed-forward neural network controller so that
the closed-loop system fits the vector field implicitly defined by the data.
Then,
we propose a new supervised learning method to iteratively re-train
the neural network controller so that
safety specifications are satisfied,
namely,
the robot is steered away from states that can result in
collisions with the obstacles.
To do so,
at each re-training iteration,
we compute an over-approximation of
the robot's forward reachable set under the current trained controller
using the Interval Bound Propagation (IBP) method~\cite{50517380}
and
update the loss function with appropriate penalty terms used to minimize
the overlap between
this over-approximation of the forward reachable set
and
the set of unsafe states.
Since the robot configuration space and, therefore, the safe set of states
needed to compute the robot's forward reachable set,
are difficult to obtain explicitly due to the arbitrary geometry of
the robot and the workspace,
we use the adaptive partitioning method proposed in~\cite{21101558} to
obtain a simpler over-approximation of this safe set.
Particularly,
we construct a cover of the set of feasible robot states
consisting of rectangular cells that
either
contain only safe configurations
or
intersect with the boundary of the safe set and, thus, contain
both safe and unsafe states.
Then,
we compute suitable under- and over-approximations of the robot's footprint
which we use to determine
whether a given cell contains only safe states or not.
Cells that lie on the boundary of the safe set and, therefore,
contain unsafe states, are recursively subdivided
until a tight enough cover is obtained.
Since the quality of the over-approximation of the forward reachable set
depends on the accuracy of the over-approximation of the safe set
and
on the parameters of the neural network controller,
which are updated with every re-training iteration,
we also propose a method to
refine the cell partition of the safe set at each iteration
by subdividing and/or merging cells based on the overlap between
the over-approximation of their forward reachable sets
and
the set of unsafe states.
This way,
we can control the tradeoff between
computational complexity and tightness of the safety violation bounds.
%
%
Finally,
we provide a simulation study that verifies the efficacy of
the proposed scheme.

To the best of our knowledge,
this is the first safe learning framework for
neural network controllers that (i) provides numerical worst-case bounds on safety violation
over the whole configuration space,
defined by the overlap between
the over-approximation of the forward reachable set of the closed-loop system
and
the set of unsafe states,
and (ii)
controls the tradeoff between computational complexity and tightness
of these bounds.
Compared to the methods in \cite{01373512,18673424,52351685} that employ control barrier functions or Hamilton-Jacobi reachability to design fail-safe projection operators or supervisory control policies, respectively, that can be wrapped around pre-trained nominal neural network controllers that are possibly unsafe due to, e.g., insufficient data during training, here we directly train neural network controllers with safety specifications in mind. On the other hand, unlike the methods in \cite{73430503,74521314,01373512} that also directly train safe neural network controllers using sufficiently many safe-by-design training samples, here safety of the closed-loop system does not depend on data points, but on  safety violation bounds defined over the whole configuration space that enter explicitly the loss function as penalty terms during training. As a result, when our method fails to guarantee safety of the closed-loop system in the whole configuration space, it does so with {\em grace} by also providing safety violation bounds whose size can be controlled. Such guarantees on the closed-loop performance cannot be obtained using the methods in \cite{73430503,74521314,01373512} that critically depend on the density of sampling.
While the heuristic methods in~\cite{32783264,03272408}
can obtain tight bounds on the controller outputs used to
verify the robustness of the controlled system,
they do so by considering the reachable set of the controller
instead of the closed-loop dynamics.
The key idea in this work that allows us to control tightness of
the safety violation bounds as a function of the computational cost is that
the partition of the safe set into cells is guided by
the overlap between the over-approximation of
the forward reachable set of the closed-loop system
and
the set of unsafe states, which itself measures safety violation.
Therefore,
cells located far from the boundary of the safe set
are not unnecessarily subdivided given that
they do not require precise control input bounds to be labeled as safe.
As a result,
the partition of the safe set computed by our method consists of
noticeably fewer cells compared to the number of cells returned by
brute-force subdividing the safe set to
achieve the same tightness of safety violation bounds.
%
%
The effect of a smaller partition is
fewer reachable set evaluations required at each iteration of
the training process; a unique feature of the proposed method. Our approach is inspired by the methods proposed in \cite{68156223,18463457} that
use reachability analysis to train robust neural networks for regression
and classification. Here, we extend these subdivision-based reachability
analysis algorithms to learn safe neural
network controllers. 

Perhaps most closely related to the method proposed here is the approach in \cite{60001271} that also trains provably safe neural network controllers for robot navigation. Specifically, given an arbitrary fixed partition of the robot’s state space and the controller’s parameter space, \cite{60001271} employs reachability analysis to build a transition graph and identify regions of control parameters (i.e., weights and biases) for which the closed-loop system is guaranteed to be safe. Then, a distinct ReLU neural network controller is trained for every state space cell using standard learning methods augmented by a projection operator that restricts the trained neural network weights to the regions found to be safe. Compared to \cite{60001271}, the method proposed here is not restricted to ReLU neural networks and can accommodate any class of strictly increasing continuous activation functions. Moreover, the method proposed here applies to non-point robots and returns smooth trajectories that respect the robot dynamics. To the contrary, \cite{60001271} applies to point robots and returns trajectories that are non-smooth due to possible discontinuities when the ReLU neural network controllers are stitched together at consecutive state space cells. Finally, feasibility of the control synthesis problem in \cite{60001271} strongly relies on the availability of sufficient data needed to train neural network weights that belong to the regions found to be safe. Instead, the method proposed here removes such strong assumptions on data that are hard to satisfy in practice and instead allows for graceful safety violations, of a bounded magnitude that can be spatially controlled.

We organize the paper as follows.
%
In \autoref{sec/problem},
we formulate the problem under consideration
while
in \autoref{sec/method}
we develop the proposed adaptive subdivision method that relies on reachability analysis techniques to partition the safe set into cells 
and
present the proposed safety-aware learning method.
Finally,
in \autoref{sec/sims}
we conclude this work by presenting results corroborating the
efficacy of our scheme.

\textit{Notation:}
We will refer to an interval of the set of real numbers \( \s{set/num/real} \)
as a simple interval,
whereas the Cartesian product of one or more simple intervals will be
referred to as a composite interval.
An \(n\)-dimensional interval is a composite interval of the form
\(
[ x_{l,1}, x_{u,1} ]
\times
[ x_{l,2}, x_{u,2} ]
\times
\ldots
\times
[ x_{l,n}, x_{u,n} ]
\).
Given
a compact set \( \mathcal{S} \in \s{set/num/real}^{n} \)
and
a vector
\( \delta = [\delta_1, \delta_2, \ldots, \delta_n] \in \s{set/num/real}^{n} \)
we shall use \( \s{set/rel/vol/scaled}{\mathcal{S}}{\delta} \) to denote
the volume of \( \mathcal{S} \) after scaling the \(i\)-th dimension by
the \( \delta_{i} \), i.e.,
\(
\s{set/rel/vol/scaled}{\mathcal{S}}{\delta} =
\prod_{i = 1}^{n}{
  \delta_i \cdot (x_{u,i} - x_{l,i})
}
\).
For brevity,
we shall write \( \s{set/rel/vol}{\mathcal{S}} \)
to denote
\( \s{set/rel/vol/scaled}{\mathcal{S}}{\delta} \)
when \( \delta = [1, 1, \ldots, 1] \). Given a set, $\mathcal{S}$, $\s{set/Realsets}(\mathcal{S})$ is the set of all its compact subsets.  We will denote $\mathcal{C}_{r}=[-r,r]^3$ and $\mathcal{B}_r = \{x\in\s{set/num/real}| ||x||\leq r\}$ The zero element in any vector space will be denoted $\mathrm{0}$.

\section{Problem Formulation}%
\label{sec/problem}

\begin{figure}
  \centering
  \includegraphics[width=0.9\linewidth]{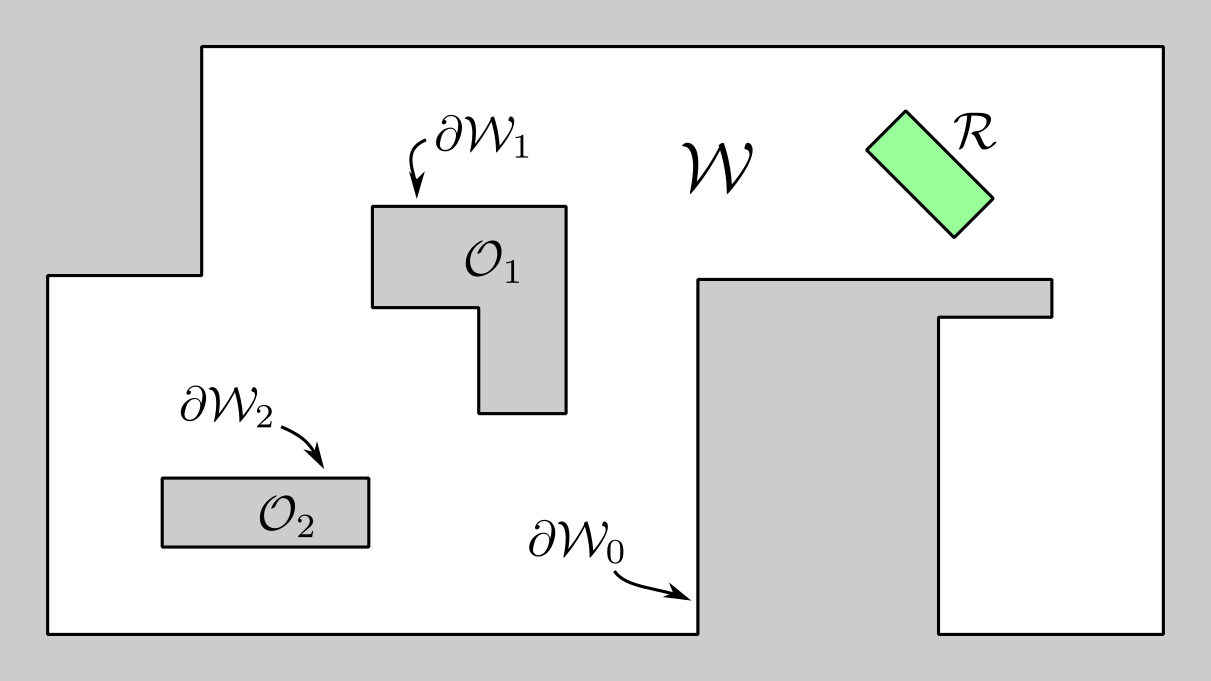}
  \caption{
    Robot \( \s{rob/body} \) operating within
    a workspace \( \s{ws} \) occupied by two inner obstacles.
  }%
  \label{fig/ws/example}
\end{figure}

We consider
a polygonal shaped robot \( \s{rob/body} \) operating within
a compact, static workspace \(\s{ws} \s{set/rel/sub} \s{set/num/real}^{2}\)
defined by
an outer boundary \( \s{ws/bnd/out} \)
and
\( \s{ws/obst/amount} \) inner boundaries
\(
\s{ws/bnd}{i},
i \in \s{ws/obst/idxset} \triangleq
\s{set/enum}{1, 2, \ldots, \s{ws/obst/amount}}
\),
corresponding to 
a set of disjoint, fixed inner obstacles
\( \s{ws/obst}{i} \),
as seen in \autoref{fig/ws/example}.
%
%
We assume that
the boundary \( \s{rob/body/bnd} \) of the robot's body
and
the boundaries \( \s{ws/bnd/out}, \s{ws/bnd/ins}{i} \) of the workspace
are polygonal Jordan curves.
Let
\(\s{frame/ws}\) and \(\s{frame/rob}\) be coordinate frames
arbitrarily embedded in the workspace and on the robot, respectively,
and
let
\(\s{rob/cfg} = \s{lrsb}{\s{rob/pos/x}, \s{rob/pos/y}, \s{rob/angle}}^T\)
denote the configuration of the robot on the plane,
specifying the relative position
\(
\s{lrsb}{\s{rob/pos/x}, \s{rob/pos/y}}^T \in \s{set/num/real}^2
\)
and
the orientation
\(
\s{rob/orient/at}{\s{rob/angle}} =
{[\cos{\s{rob/angle}}, \sin{\s{rob/angle}}]}^T
\),
\(\s{rob/angle} \in [ 0, 2\pi )\),
of \(\s{frame/rob}\) with respect to \(\s{frame/ws}\).
We assume that
the robot is able to translate and rotate subject to
the following discrete-time non-linear dynamics
\begin{equation}
  \label{eq/sys}
  \s{rob/state/at}{k+1} =
  \s{rob/dynamics}{\s{rob/state/at}{k}}{\s{rob/input/at}{k}}
  ,
\end{equation}
where
\(
\s{rob/state} = \s{lrsb}{\s{rob/cfg}^T, \s{rob/state/misc}^T }^T
\in \s{rob/ss} \s{set/rel/sub} \s{set/num/real}^{\s{rob/state/dim}}
\),
and
\( \s{rob/input} \in \s{set/num/real}^{\s{rob/input/dim}} \)
denote the robot's state and control input,
respectively, and $q$ denotes miscellaneous
robot states, e.g., linear and angular velocities, accelerations, etc. We define
\(
\s{rob/ss} \triangleq
\s{rob/ss/cfg} \times \s{rob/ss/misc}
\),
where
\( \s{rob/ss/cfg} = \s{ws} \times \s{set/num/circle} \),
and
\(
\s{rob/ss/misc}
\s{set/rel/sub}
\s{set/num/real}^{\s{rob/state/dim}-3}
\)
is a \( (\s{rob/state/dim}-3) \)-dimensional interval of
miscellaneous safe robot states, e.g.,
allowed bounds on the robot's velocities.
%
%
The non-linear function
\(
\s{rob/f} :
\s{set/num/real}^{\s{rob/state/dim}} \mapsto \s{set/num/real}^{\s{rob/state/dim}}
\)
is assumed known and continuously differentiable.
Furthermore,
we assume that
a function \( \s{rob/f/oa} \) is known which maps
composite intervals \( \mathcal{A} \) and \( \mathcal{B} \)
of \( \s{rob/ss} \) and \( \s{rob/is} \), respectively,
to composite intervals of \( \s{rob/ss} \) such that
\begin{equation}%
  \label{eq/f/oa}
  \s{rob/f/at}{\s{rob/state}}{\s{rob/input}} \in
  \s{rob/f/oa/at}{\mathcal{A}}{\mathcal{B}}
  , ~~~
  \forall \s{rob/state}, \s{rob/input} \in
  \mathcal{A} \times \mathcal{B}
  .
\end{equation}

In order to steer the robot to
a desired feasible state \( \s{rob/state/goal} \),
we equip the robot with a feed-forward multi-layer neural network controller
\(
\s{nn}:
\s{set/num/real}^{\s{nn/in/dim}} \mapsto \s{set/num/real}^{\s{nn/out/dim}}
\)
that
consists of \( \s{nn/layers/amount} \) fully connected layers, i.e.,
\begin{equation}%
  \label{eq/nn/def}
  \s{rob/input} =
  \s{nn}(\s{rob/state}) =
  \s{nn/layer/map/at}{\s{nn/layers/amount}}{
    \s{nn/layer/map/at}{\s{nn/layers/amount}-1}{
      \ldots
      \s{nn/layer/map/at}{1}{\s{rob/state}}
      \ldots
    }
  }
  ,
\end{equation}%
\begin{equation}%
  \label{eq/nn/def/layer}
  \s{nn/layer/map/at}{i}{x} =
  \s{nn/layer/afun/at}{i}{
    \s{nn/layer/weight}{i} \cdot x + \s{nn/layer/bias}{i}}
  ,~~~
  \forall i \in \s{nn/layers/idxset}
  ,
\end{equation}
where
\( \s{nn/layer/weight}{i}, \s{nn/layer/bias}{i}, \s{nn/layer/afun}{i} \)
denote the weight matrix, bias vector and activation function
of the \( i \)-th layer,
for all \( i \in \s{nn/layers/idxset} \).
Specifically,
the \( i \)-th layer of \( \s{nn} \) consists of
\( \s{nn/layer/size}{i} \) neurons, i.e.,
\(
\s{nn/layer/weight}{i} \in
\s{set/num/real}^{\s{nn/layer/size}{i} \times \s{nn/layer/size}{i-1} }
\),
\(
\s{nn/layer/bias}{i} \in
\s{set/num/real}^{\s{nn/layer/size}{i} \times 1 }
\),
\(
\s{nn/layer/afun}{i} \in \s{set/num/real}^{\s{nn/layer/size}{i} \times 1}
\),
for all \( i \in \s{nn/layers/idxset} \)
where
\( \s{nn/layer/size}{0} = \s{nn/in/dim} \)
and
\( \s{nn/layer/size}{\s{nn/layers/amount}} = \s{nn/out/dim} \).

To find values for
\( \s{nn/layer/weight}{i}, \s{nn/layer/bias}{i}, i \in \s{nn/layers/idxset} \)
such that
the controller \( \s{nn} \)
can steer the robot to the desired state \( \s{rob/state/goal} \),
we assume that we are given 
a set \( \s{data/set} \) consisting of points
\(
(\s{data/point/ith/state}{i}, \s{data/point/ith/input}{i}) \in
\s{rob/ss} \times \s{rob/is}
\)
sampled from robot trajectories
beginning at different random states 
\( \s{rob/state} \in \s{rob/ss} \)
and
terminating at \( \s{rob/state/goal} \).
Additionally,
we assume that the given trajectories are feasible in the sense that
they consist entirely of states that are safe.
A state
\( \s{rob/state} = \s{lrsb}{\s{rob/state/cfg}^T, \s{rob/state/misc}^T}^T \)
is said to be safe if
\( \s{rob/state/misc} \in \s{rob/ss/misc} \)
and
the robot at the corresponding configuration does not overlap with
any of the static obstacles, i.e.,
\(
\s{rob/body/at}{\s{rob/state/cfg}} \s{set/rel/sub} \s{ws}
\)
where
\( \s{rob/body/at}{\s{rob/cfg}} \),
denotes
the robot's footprint when
it is placed at \( \s{lrsb}{\s{rob/pos/x}, \s{rob/pos/y}}^T \) with
orientation \( \s{rob/angle} \).
If the parameter space of the network \( \s{nn} \)
and
the dataset \( \s{data/set} \)
are large enough,
parameters \( \s{nn/weight}, \s{nn/bias} \) can be typically found by solving
the optimization problem
\begin{equation}%
  \label{eq/optim/base}
  \begin{aligned}
    & \underset{
      \s{nn/weight}, \s{nn/bias}
    }{\text{minimize}}
    & & \s{ml/base/costfun/at}{\s{nn/weight}, \s{nn/bias}} \\
  \end{aligned}
  ,
\end{equation}
where
\begin{equation}
  \label{eq/optim/base/costfun}
  \s{ml/base/costfun/at}{\s{nn/weight}, \s{nn/bias}} =
  \s{ml/base/coef/err}
  \cdot
  \sum_{\s{lrp}{\s{data/point/state}, \s{data/point/input}} \in \s{data/set}}{
    \s{lrp}{
      \s{data/point/input} -
      \s{nn/at}{\s{data/point/state}; \s{nn/weight}, \s{nn/bias}}
    }^{2}
  }
  +
  \s{ml/base/coef/reg}
  \cdot
  \s{ml/base/reg/at}{\s{nn/weight}, \s{nn/bias}}
  .
\end{equation}
In the loss function~\eqref{eq/optim/base/costfun},
\(
\s{nn/weight} = \s{lrsb}{
  \s{nn/layer/weight}{1},
  \s{nn/layer/weight}{2},
  \ldots,
  \s{nn/layer/weight}{\s{nn/layers/amount}}
}
\)
,
\(
\s{nn/bias} = \s{lrsb}{
  \s{nn/layer/bias}{1},
  \s{nn/layer/bias}{2},
  \ldots,
  \s{nn/layer/bias}{\s{nn/layers/amount}}
}
\)
,
\( \s{ml/base/reg/at}{\cdot} \)
is a regularization term
and
\( \s{ml/base/coef/err} \),
\( \s{ml/base/coef/reg} \)
are positive constants.
%
%
We remark that
the controller obtained from the solution of
the problem \eqref{eq/optim/base}-\eqref{eq/optim/base/costfun}
is expected to be safe only around the trajectories in
the training dataset \( \s{data/set} \).
This behavior is generally not desirable.
Instead,
what is desired is that
the robot dynamics~\eqref{eq/sys}
under the control law~\eqref{eq/nn/def}
ensure that the safe set of states \( \s{rob/ss/safe} \) remains invariant,
where \( \s{rob/ss/safe} \) consists of
all the states
\( \s{lrp}{\s{rob/state/cfg}, \s{rob/state/misc}} \in \s{rob/ss} \)
such that \( \s{rob/body/at}{\s{rob/state/cfg}} \s{set/rel/sub} \s{ws} \).
%
%
Therefore,
in this paper we consider the following problem.
\begin{problem}%
  \label{stmt/prob}
  Given a static workspace \( \s{ws} \),
  a polygonal robot \( \s{rob/body} \) subject to dynamics \( \s{rob/f} \),
  a safe set of miscellaneous robot states \( \s{rob/ss/misc} \),
  and
  a dataset \( \s{data/set} \),
  train a neural network controller \( \s{nn} \) so that
  the closed-loop trajectories fit the data in the set \( \s{data/set} \)
  and
  the safe set \( \s{rob/ss/safe} \) either remains invariant or 
  possible safety violations are explicitly bounded.
\end{problem}

\section{Methodology}%
\label{sec/method}

In order to address \autoref{stmt/prob},
we 
first employ standard learning methods to solve
the optimization problem~\eqref{eq/optim/base}
and
obtain initial values for
the parameters \( \s{nn/weight} \) and \( \s{nn/bias} \).
As discussed before,
the controller \( \s{nn} \) obtained at this stage
is expected to be safe only
around the points in \( \s{data/set} \),
assuming that
they have been sampled from safe trajectories and
the network
fits the dataset adequately.
Next,
we employ the subdivision method presented in
\autoref{sec/method/feas}
to obtain a partition $\mathcal{P}$ of the feasible space into cells that
provide a tight over-approximation \( \s{rob/ss/safe/oa} \) of
the robot's safe state space \( \s{rob/ss/safe} \).
Using the over-approximation of the safe set \( \s{rob/ss/safe} \) as
a set of initial robot states, in \autoref{sec/method/reach},
we compute
an over-approximation \( \s{rob/ss/frs/crit/oa} \) of
the forward reachable set \( \s{rob/ss/frs/crit} \) of
the closed loop system under the neural network controller \( \s{nn} \).
Since the accuracy of the over-approximation \( \s{rob/ss/frs/crit/oa} \)
depends on the partition \( \s{rob/ss/partition} \) of
the over-approximation of the safe set \( \s{rob/ss/safe/oa} \), we also propose a method to refine the partition \( \s{partition} \)
in order to improve the accuracy of
the over-approximation \( \s{rob/ss/frs/crit/oa} \).
Finally,
in \autoref{sec/method/optim},
we use the overlap between
the over-approximation of
the forward reachable set \( \s{rob/ss/frs/crit/oa} \)
and
the set of unsafe states to design penalty terms in
problem~\eqref{eq/optim/base}
that explicitly capture safety specifications.
As the parameters \( \s{nn/weight}, \s{nn/bias} \) of
the neural network \( \s{nn} \) get updated,
so does the shape of the forward reachable set \( \s{rob/ss/frs/crit/oa} \),
which implies that the initial partition \( \s{rob/ss/partition} \) of
the over-approximation of the safe set \( \s{rob/ss/safe/oa} \)
does no longer accurately approximate \( \s{rob/ss/frs/crit} \).
For this reason,
we repeat the partition refinement and training steps
proposed in \autoref{sec/method/reach} and \autoref{sec/method/optim} for
a sufficiently large number of epochs \( \s{ml/aug/epochs/amount} \).
The procedure described above is outlined in \autoref{alg/main}.

\IncMargin{1em}
\begin{algorithm2e}
    \caption{Safety-aware controller design.\label{alg/main}}%
     \Fn{$\{\s{nn/weight},\ \s{nn/bias}\}=$\Call{Train}{$\s{rob/ss/partition/wthresh},\ \s{rob/ss/refine/vphresh},\ \s{rob/ss/refine/vqhresh},\ \s{ml/aug/epochs/amount} $}
    }{%
        $ \s{nn/weight}, \s{nn/bias} \gets $ Solve problem~\eqref{eq/optim/base}\;
        $\s{rob/ss/partition}\gets$\Call{BuildPrtn}{$ \s{ws},\s{rob/body},\s{rob/ss/partition/wthresh} $} \tcp*{Use \autoref{alg/ss/partition}}
        \For{ $i$ in $ 1, 2, \ldots, \s{ml/aug/epochs/amount} $
        }{
            $ \s{rob/ss/partition} \gets $\Call{RefinePrtn}{$\s{rob/ss/partition},\s{rob/ss/partition/wthresh},\s{rob/ss/refine/vphresh},\s{rob/ss/refine/vqhresh};\s{ws},\s{rob/body},\s{rob/ss/misc},f$}\tcp*{Use \autoref{alg/ss/refinement}}
            $ \s{nn/weight}, \s{nn/bias} \gets $ Solve problem~\eqref{eq/optim/aug}\tcp*{Update $ \s{nn/weight}, \s{nn/bias} $}
        }
    }
\end{algorithm2e}
\DecMargin{1em}


\subsection{Over-Approximation of the Safe State Space}%
\label{sec/method/feas}


To obtain
a tight over-approximation \( \s{rob/ss/safe/oa} \) of
the robot's safe state space,
that will be used later to compute
the forward reachable set of the closed-loop system
as well as its overlap with the set of unsafe states,
we adaptively partition the safe state space \( \s{rob/ss/safe} \)
into cells using the adaptive subdivision method proposed in~\cite{21101558}.
Specifically,
starting with a composite interval enclosing
the set of feasible states \( \s{rob/ss/safe} \),
we recursively subdivide this interval into subcells based on
whether appropriately constructed under- and over-approximations of
the robot's footprint intersect with the workspace's boundary.
The key idea is that cells for which
the under-approximation (resp. over-approximation) of the robot's footprint
overlaps (resp. does not overlap) with the complement of \( \s{ws} \)
contain only unsafe (resp. safe) states and
subdividing them any further will not improve the accuracy of the partition
whereas
cells which contain both safe and unsafe states should be further
subdivided into subcells as
they reside closer to the boundary of \( \s{rob/ss/safe} \).
%
%
This procedure is repeated until
cells that constitute the partition of safe state space either
contain only safe states or
intersect with the boundary of \( \s{rob/ss/safe} \) and
their size is below a user-specified threshold.

We begin by recalling that
the set \( \s{rob/ss/safe} \) is defined as
\( \s{rob/ss/cfg/safe} \times \s{rob/ss/misc} \)
where
\( \s{rob/ss/misc} \) is
a composite interval
and
\( \s{rob/ss/cfg/safe} \)
is defined as the largest subset of \( \s{rob/ss/cfg} \)
such that
\( \s{rob/body/at}{\s{rob/state}} \s{set/rel/subeq} \s{ws} \)
for all \( \s{rob/state} \in \s{rob/ss/cfg/safe} \).
Thus,
in order to compute \( \s{rob/ss/safe/oa} \),
we need to find
a valid over-approximation
\( \s{rob/ss/cfg/safe/oa} \) of \( \s{rob/ss/cfg/safe} \).
Additionally,
we require that
the over-approximation \( \s{rob/ss/safe/oa} \) is defined by
the union of a finite number of composite intervals, i.e., cells.
This construction is necessary to obtain the forward reachable set of
the closed loop system using the method proposed in \autoref{sec/method/reach}.
%
%
We now consider a composite interval \( \s{rob/ss/cell} \) in
the robot's state space \( \s{rob/ss} \)
which we shall refer to as a state space cell.
Each state space cell \( \s{rob/ss/cell} \) can be written as
\( \s{rob/ss/cfg/cell} \times \s{rob/ss/misc/cell} \),
where
\( \s{rob/ss/cfg/cell} \in \s{rob/ss/cfg} \)
denotes a configuration space cell, i.e.,
a set of robot's positions and orientations.
%
We notice that
if \( \s{rob/body/at}{\s{rob/cfg}} \s{set/rel/subeq} \s{ws} \)
for all \( \s{rob/cfg} \in \s{rob/ss/cfg/cell} \),
then the cell \( \s{rob/ss/cfg/cell} \) consists entirely of
safe robot configurations and thus must lie entirely
inside \( \s{rob/ss/cfg/safe} \).
On the contrary,
if
\(
\s{rob/body/at}{\s{rob/cfg}} \s{set/rel/isect} \s{ws} \neq
\s{rob/body/at}{\s{rob/cfg}}
\)
for all \( \s{rob/cfg} \in \s{rob/ss/cfg/cell} \),
then the cell \( \s{rob/ss/cfg/cell} \) consists entirely of
unsafe robot configurations and thus must lie
outside \( \s{rob/ss/cfg/safe} \).
%
%
Since checking the above conditions to classify cells as safe or unsafe
is not easy due to the complex shape of
\( \s{rob/ss/safe} \) and the robot,
we instead employ for this purpose over- and under-approximations of
the robot's footprint,  
\( \s{rob/fpoa/at}{\s{rob/ss/cfg/cell}} \)
and
\( \s{rob/fpua/at}{\s{rob/ss/cfg/cell}} \), respectively, 
associated with
the configuration space cell \( \s{rob/ss/cfg/cell} \)
(see \autoref{fig/footprints})
that satisfy
\begin{equation}
  \begin{aligned}
    \s{rob/fpoa/at}{\s{rob/ss/cfg/cell}}
    &
    \s{set/rel/supeq}
    \s{set/rel/union/multi}{\s{rob/cfg} \in \s{rob/ss/cfg/cell}}{}{
      \s{rob/body/at}{\s{rob/cfg}}
    }
    \text{, and }
    \s{rob/fpua/at}{\s{rob/ss/cfg/cell}}
    &
    \s{set/rel/subeq}
    \s{set/rel/isect/multi}{\s{rob/cfg} \in \s{rob/ss/cfg/cell}}{}{
      \s{rob/body/at}{\s{rob/cfg}}
    }
    .
  \end{aligned}\label{eq:robotbounds}
\end{equation}
We remark that
such over- and under-approximations of the robot's footprint
can be easily computed for
box shaped cells \( \s{rob/ss/cfg/cell} \) using techniques such as
the \textit{Swept Area Method}~\cite{21101558}.
Additionally,
we notice that a cell \( \s{rob/ss/cfg/cell} \) for which
\( \s{rob/fpoa/at}{\s{rob/ss/cfg/cell}} \s{set/rel/subeq} \s{ws} \)
(resp.
\( \s{rob/fpua/at}{\s{rob/ss/cfg/cell}} \s{set/rel/not/subeq} \s{ws} \))
lies entirely inside (resp. outside) \( \s{rob/ss/cfg/safe} \).
We shall refer
to the first type of cells as safe
and
to the second as unsafe.
Cells not belonging to either of these two classes intersect with
the boundary of \( \s{rob/ss/cfg/safe} \) and will be labeled as mixed.
Our goal is to approximate \( \s{rob/ss/cfg/safe} \) by
the union of a finite number of safe cells but, in general,
the shape of the robot's configuration space
does not admit such representations.
As such,
we instead compute
a cover of \( \s{rob/ss/cfg/safe} \) consisting of
both safe and mixed cells,
where the size of the mixed cells should be kept as small as possible.
%
%

\begin{figure}
  \centering
  \includegraphics[width=0.6\linewidth]{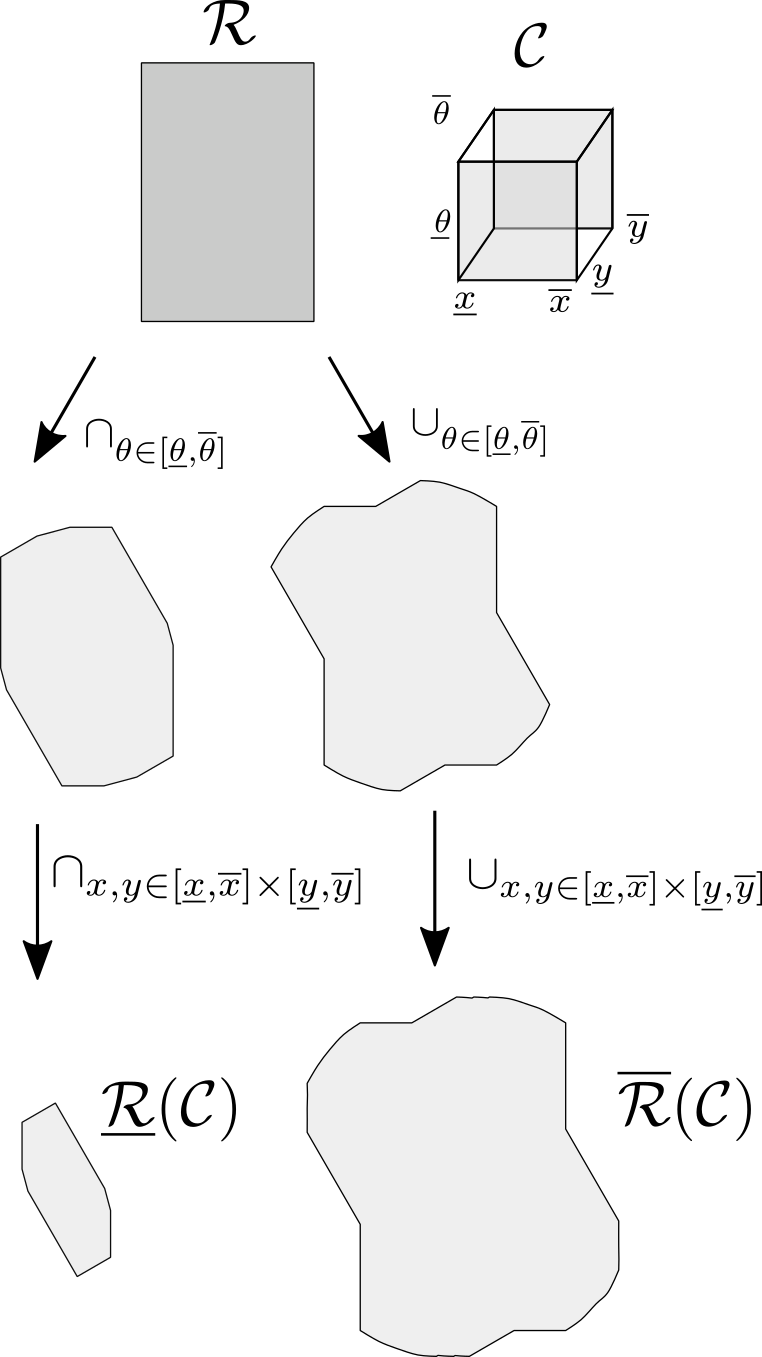}
  \caption{
    Under-approximation \( \s{rob/fpua/at}{\s{rob/ss/cell}} \) (left)
    and
    over-approximation \( \s{rob/fpoa/at}{\s{rob/ss/cell}} \) (right)
    of a rectangular shaped robot corresponding to
    the configuration space cell
    \(
    \s{rob/ss/cell} =
    [\underline{\s{rob/pos/x}}, \overline{\s{rob/pos/x}}] \times
    [\underline{\s{rob/pos/y}}, \overline{\s{rob/pos/y}}] \times
    [\underline{\s{rob/angle}}, \overline{\s{rob/angle}}]
    \).
  }%
  \label{fig/footprints}
\end{figure}

To do so,
we propose the adaptive subdivision method outlined in
\autoref{alg/ss/partition}.
Beginning with the state space cell defined by
the Cartesian product of
the axis-aligned bounding box of \( \s{ws} \)
and
the set \( \s{rob/ss/misc} \),
the proposed algorithm builds a partition \( \s{rob/ss/partition} \) of
the robot's state space by adaptively subdividing cells
that lie on its frontier \( \s{rob/ss/frontier} \).
Specifically,
at each iteration,
a cell \( (\s{rob/ss/cfg/cell}, \s{rob/ss/misc/cell}) \in \s{rob/ss/frontier} \)
is selected and if \( \s{rob/fpoa/at}{\s{rob/ss/cfg/cell}} \)
lies inside \( \s{ws} \),
then it is added to \( \s{rob/ss/partition} \). 
%
Instead,
if \( \s{rob/fpoa/at}{\s{rob/ss/cfg/cell}} \) overlaps with
the complement of \( \s{ws} \), then that cell gets discarded. 
%
If, at this point,
the cell cannot be labeled either safe or unsafe
and
its size is not smaller than
a user specified threshold \( \s{rob/ss/partition/wthresh} \),
then the cell gets subdivided and the new cells are added to
the frontier \( \s{rob/ss/frontier} \). 
%
Otherwise,
if the cell's size is smaller than the specified threshold
and
cannot be subdivided any more,
it is included in the partition \( \s{rob/ss/partition} \). 
%
Finally,
the desired over-approximation \( \s{rob/ss/safe/oa} \) can be obtained as
the union of the cells in the computed partition \( \s{rob/ss/partition} \).

In the following theorem we show that \autoref{alg/ss/partition} terminates in finite time and its worst case run-time is related to the size of the workspace and the reciprocal of the volume of the smallest allowable cell, as expected. Moreover, we show that the resulting cell partition $\s{rob/ss/partition}$ lies in the workspace and covers the robot's entire safe set, exceeding it in proportion to a user-selected tolerance. Finally, we bound the possible safety violations that can result from the operation of the robot  in $\s{rob/ss/partition}$. Specifically, we show that violations are proportional to the size of translation outside of $\s{rob/ss/safe}$, but rotations outside of $\s{rob/ss/safe}$ are amplified by the maximum distance from any point in the robot to its center of rotation.

\begin{theorem}\label{thm:InitialPartition}
Suppose a compact set $\s{ws} \s{set/rel/sub} \s{set/num/real}^{2}$, a mapping $\s{rob/body}:\s{set/num/real}^{2}\times[0,2\pi) \mapsto \s{set/Realsets}(\s{set/num/real}^{2})$, and $\s{rob/ss/partition/wthresh}>0$. Let also $\s{rob/fpoa},\s{rob/fpua}:\s{set/Realsets}(\s{set/num/real}^{2}\times[0,2\pi))\mapsto\s{set/Realsets}(\s{set/num/real}^{2})$ be mappings satisfying \autoref{eq:robotbounds}. Let $\s{rob/ss/frontier}_0$ be the initialization of $\s{rob/ss/frontier}$. Then \autoref{alg/ss/partition} terminates in finite time after at most $2\s{rob/ss/partition/wthresh}^{-3}\s{set/rel/vol}{\s{rob/ss/frontier}_0} $ repetitions of the while loop. It's output, $\s{rob/ss/partition}$, is a collection of cells in $\s{ws}$ satisfying  $\s{rob/ss/safe} \subseteq \cup_{\s{rob/ss/cell}\in \s{rob/ss/partition}}\s{rob/ss/cell}  \subseteq  \s{rob/ss/safe}\oplus (\mathcal{C}_{\s{rob/ss/partition/wthresh}}\times \mathrm{0})$. 

Suppose further that $\s{rob/body}([x,y,\theta]^\trans)= R(\theta)\mathrm{R}+[x,y]^\trans$, where $R(\theta)$ is a rotation in $\s{set/num/real}^{2}$, and $\mathrm{R}\subseteq\mathcal{B}_r$ for some $r>0$. In this case, $ \cup_{\s{rob/ss/cell}\in \s{rob/ss/partition}}\s{rob/ss/cell}  \cup_{p\in \s{rob/ss/cfg/cell}} \s{rob/body}(p) \subseteq \s{ws}\oplus\mathcal{B}_{v}$ where $v=2r^2(1-\cos(\s{rob/ss/partition/wthresh}))+2\sqrt{2}\s{rob/ss/partition/wthresh}$.
\end{theorem}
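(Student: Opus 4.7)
The plan is to prove each of the three claims in turn---finite termination with the stated iteration bound, the two coverage inclusions, and the footprint bound under the rigid-body assumption---by viewing \autoref{alg/ss/partition} as implicitly growing a subdivision tree rooted at $\mathcal{F}_0$, whose internal nodes are cells that were subdivided and whose leaves are cells that were either finalized into $\mathcal{P}$ or discarded as unsafe.

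For termination and complexity, I would argue that a cell is subdivided only when it is mixed and its configuration size exceeds $\epsilon_w$; hence every leaf has configuration sides of length at least $\epsilon_w/2$, so its $\mathcal{Z}$-volume is bounded below by a constant multiple of $\epsilon_w^3$ times the volume of the $q$-factor. Since the leaves have disjoint interiors and tile a subset of $\mathcal{F}_0$, their count is $O(\epsilon_w^{-3}\mathrm{vol}(\mathcal{F}_0))$, and because each iteration of the while loop either subdivides an internal node or finalizes a leaf, the total iteration count is bounded by $2\epsilon_w^{-3}\mathrm{vol}(\mathcal{F}_0)$, with the precise constants absorbed by the octree-style subdivision scheme.

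For the coverage claims I would establish the two inclusions separately. For $\mathcal{Z}_s \subseteq \bigcup_{\mathcal{C}\in\mathcal{P}}\mathcal{C}$, the key invariant is that any safe state $z$ lies in some frontier cell throughout execution: subdivision preserves this trivially, and such a cell can never be discarded because discarding requires the footprint under-approximation to exit $\mathcal{W}$, which by \eqref{eq:robotbounds} would force every configuration in the cell---including that of $z$---to be infeasible, a contradiction. For the reverse inclusion, each cell in $\mathcal{P}$ is either certified safe (so already contained in $\mathcal{Z}_s$ by the over-approximation property in \eqref{eq:robotbounds}) or mixed with configuration size at most $\epsilon_w$; in the latter case the cell meets the boundary of $\mathcal{Z}_s$, so every point in it lies within $\epsilon_w$ in each configuration coordinate, and zero in the $q$ coordinates, of a safe state, producing the $\mathcal{C}_{\epsilon_w}\times\mathrm{0}$ Minkowski expansion.

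For the footprint bound under the rigid-body hypothesis, I would fix $p=[x,y,\theta]^\trans$ in some cell of $\mathcal{P}$, select a safe companion $p^*=[x^*,y^*,\theta^*]^\trans\in\mathcal{Z}_s$ supplied by the second inclusion, and estimate the Hausdorff distance between $\mathcal{R}(p)$ and $\mathcal{R}(p^*)\subseteq\mathcal{W}$. For any material point $q\in\mathrm{R}\subseteq\mathcal{B}_r$, the triangle inequality gives
\begin{equation*}
\|R(\theta)q+[x,y]^\trans - R(\theta^*)q-[x^*,y^*]^\trans\| \le \|(R(\theta)-R(\theta^*))q\| + \sqrt{2}\,\epsilon_w,
\end{equation*}
and the planar-rotation identity $(R(\theta)-R(\theta^*))^\trans(R(\theta)-R(\theta^*))=2(1-\cos(\theta-\theta^*))I$, combined with $\|q\|\le r$ and $|\theta-\theta^*|\le\epsilon_w$, bounds the first summand. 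Conservative upper bounds on both terms then yield $\mathcal{R}(p)\subseteq\mathcal{R}(p^*)\oplus\mathcal{B}_v\subseteq\mathcal{W}\oplus\mathcal{B}_v$. The main obstacle is reconciling the tightest analytic estimate, $r\sqrt{2(1-\cos\epsilon_w)}+\sqrt{2}\,\epsilon_w$, with the stated $v=2r^2(1-\cos\epsilon_w)+2\sqrt{2}\,\epsilon_w$: identifying the slack inequalities used to absorb the rotation-induced distortion into a quadratic-in-$r$ expression is the one non-routine step, but the geometric content of the argument is otherwise direct.
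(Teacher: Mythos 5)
Your overall route is essentially the paper's: a cell-counting argument for termination, a contradiction argument for $\mathcal{Z}_s\subseteq\cup_{\mathcal{C}\in\mathcal{P}}\mathcal{C}$ (a cell containing a safe state can never be discarded, since discarding requires $\underline{\mathcal{R}}(\mathcal{C}_{[p]})\not\subseteq\mathcal{W}$, which by \eqref{eq:robotbounds} would make every configuration in the cell unsafe), a two-case analysis (certified-safe cells versus small cells containing a safe point) for the $\mathcal{C}_{\epsilon_w}\times\mathrm{0}$ inflation, and a pointwise rigid-body displacement estimate for the footprint bound. Two remarks. First, on termination your leaf-count with sides at least $\epsilon_w/2$ only yields the stated bound up to a constant you explicitly wave away; the paper's accounting is different and tighter: each loop iteration either removes a cell from $\mathcal{F}$, decreasing $\mathrm{vol}(\mathcal{F})$ by at least the minimal cell volume $\epsilon_w^{3}$, or replaces one cell by two of the same total volume, so at most $\epsilon_w^{-3}\mathrm{vol}(\mathcal{F}_0)$ iterations of each kind can occur, which is exactly where the factor $2$ comes from. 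Second, and more importantly, the step you flag as irreconcilable is not a gap in your argument. Your estimate $r\sqrt{2(1-\cos\epsilon_w)}+\sqrt{2}\,\epsilon_w$ is the correct one; the paper's quadratic-in-$r$ term comes from a slip in its own proof, which applies the cosine rule as $\|x-R(\delta\theta)x\|=2\|x\|^{2}(1-\cos\delta\theta)$, i.e., equates the chord length with its square. Since $2r^{2}(1-\cos\epsilon_w)<r\sqrt{2(1-\cos\epsilon_w)}$ whenever the chord length is below one (the typical regime of small $\epsilon_w$), the stated $v$ cannot be recovered from the correct estimate by slack inequalities; your tighter bound should simply replace the rotational term, while the paper's $2\sqrt{2}\,\epsilon_w$ in place of $\sqrt{2}\,\epsilon_w$ for the translational part is harmless slack.
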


\begin{proof}
To begin, observe that all operations in \autoref{alg/ss/partition} are well-defined and can be computed in finite time due to the compactness assumption on $\s{ws}$ and the outputs of $\s{rob/body}$. All relevant set widths are finite due to this assumption.

To bound the while loop's repetitions, observe that each repetition either: removes a cell from $\s{rob/ss/frontier}$, reducing $\s{set/rel/vol}{\s{rob/ss/frontier}}$ by at least the minimum cell volume, $\s{rob/ss/partition/wthresh}^{3}$, or replaces one cell in $\s{rob/ss/frontier}$ with two covering the same volume, leaving $\s{set/rel/vol}{\s{rob/ss/frontier}}$ unchanged, but increasing the number of cells by one. The maximum number of cells in $\s{rob/ss/frontier}$ is bounded above by $k = \s{rob/ss/partition/wthresh}^{-3}\s{set/rel/vol}{\s{rob/ss/frontier}_0}$ because when $k$ cells are in $\s{rob/ss/frontier}$, their the total volume is at least $k\s{rob/ss/partition/wthresh}^{3}$, but $\s{set/rel/vol}{\s{rob/ss/frontier}}$ can never grow. Hence, at most $k$ repetitions can add cells to $\s{rob/ss/frontier}$, and $k$ can remove cells.

To verify that $\s{rob/ss/safe} \subseteq \cup_{\s{rob/ss/cell}\in \s{rob/ss/partition}}\s{rob/ss/cell}$, suppose that it fails to hold. Then there exists $z\in \s{rob/ss/cfg/safe}$ where $z\notin \union_{\s{rob/ss/cell}\in\s{rob/ss/partition}}\s{rob/ss/cfg/cell}$. However, by design $z\in\union_{\s{rob/ss/cfg/cell}\in\s{rob/ss/frontier}_0}\s{rob/ss/cfg/cell}$. This can only happen if a cell containing $z$, $\s{rob/ss/cfg/cell}$, was removed because $ \s{rob/fpua/at}{\s{rob/ss/cfg/cell}} \s{set/rel/not/subeq} \s{ws} $. This implies that for all $p\in\s{rob/ss/cfg/cell}$, $\s{rob/body}(p)\not\subseteq\s{ws}$. In particular, this must hold for $p=z$, contradicting the assumption that $z\in \s{rob/ss/cfg/safe}$.

To verify $\cup_{\s{rob/ss/cell}\in \s{rob/ss/partition}}\s{rob/ss/cell} \subseteq \s{rob/ss/safe}\oplus (\mathcal{C}_{\s{rob/ss/partition/wthresh}}\times \mathrm{0})$, note that cells can only be added to $\s{rob/ss/partition}$ in two ways. First, $\s{rob/ss/cell}$ can be added if $\s{rob/fpoa/at}{\s{rob/ss/cfg/cell}} \s{set/rel/subeq} \s{ws}$, which implies $\s{rob/ss/cfg/cell}\subseteq\s{rob/ss/cfg/safe}$, and hence $\s{rob/ss/cell}\subseteq\s{rob/ss/safe}\subset  \s{rob/ss/safe}\oplus (\mathcal{C}_{\s{rob/ss/partition/wthresh}}\times \mathrm{0})$. Otherwise, $\s{rob/ss/cell}$ can only be added if $ \s{rob/fpua/at}{\s{rob/ss/cfg/cell}} \s{set/rel/not/subeq} \s{ws} $ and \Call{GetMaxWidth}{$ \s{rob/ss/cfg/cell} $} $ > \s{rob/ss/partition/wthresh}$ fail to hold. These respectively imply that there exists $z\in \s{rob/ss/cfg/cell}\cap \s{rob/ss/cfg/safe}$ and that $\s{rob/ss/cfg/cell}\subseteq z \oplus \mathcal{C}_{\s{rob/ss/partition/wthresh}} \subseteq \s{rob/ss/cfg/safe}\oplus \mathcal{C}_{\s{rob/ss/partition/wthresh}}$, which means $\s{rob/ss/cell} \subseteq \s{rob/ss/safe}\oplus (\mathcal{C}_{\s{rob/ss/partition/wthresh}}\times \mathrm{0})$.

For the final assertion, suppose $p\in \s{rob/ss/cfg/cell}$ for some $\s{rob/ss/cell} \in \s{rob/ss/partition}$. From the prior assertions, $p=[x,\theta]^\trans+[\delta x,\delta\theta]^\trans$, where $[x,\theta]^\trans\in \s{rob/ss/cfg/safe}$ and $[\delta x,\delta\theta]^\trans\in\mathcal{C}_{\s{rob/ss/partition/wthresh}}$. Therefore $\s{rob/body}(p) = R(\delta\theta)R(\theta)\mathrm{R}+x+\delta x$. Using the Cosine Rule and recalling that rotations preserve norms, $||x-R(\delta\theta)x||=2||x||^2(1-\cos(\delta\theta))\leq 2r^2(1-\cos(\s{rob/ss/partition/wthresh}))$ if $x\in R(\theta)\mathrm{R}$. Hence, $\s{rob/body}(p)\subseteq \s{rob/body}([x,\theta]^\trans)\oplus\mathcal{B}_{ 2r^2(1-\cos(\s{rob/ss/partition/wthresh}))}\oplus[-\s{rob/ss/partition/wthresh},\s{rob/ss/partition/wthresh}]^2$. Noting that $\mathcal{B}_{ 2r^2(1-\cos(\s{rob/ss/partition/wthresh}))}\oplus[-\s{rob/ss/partition/wthresh},\s{rob/ss/partition/wthresh}]^2\subseteq \mathrm{B}_{v}$ gives the desired result.
\end{proof}

\begin{remark}
Note that \autoref{alg/ss/partition} does not produce a mathematical partition of $\s{rob/ss/safe/oa}$ because the cells include their boundaries, where they may overlap. If one is needed, simple modifications can be made to the cell definition and subdivision algorithms to include boundaries in only one cell without fundamentally changing \autoref{alg/ss/partition}.
\end{remark}

\IncMargin{1em}
\begin{algorithm2e}
    \caption{Given a robot described by, $\s{rob/body}$, an allowable workspace, $\s{ws}$, and a tolerance, $\s{rob/ss/partition/wthresh}$, produce a cell collection, $\s{rob/ss/partition}$, covering $ \s{rob/ss/safe/oa}$ with bounded safety violations.\label{alg/ss/partition}}%
    \Fn{$\s{rob/ss/partition}=$\Call{BuildPrtn}{$ \s{ws},\s{rob/body},\s{rob/ss/partition/wthresh} $}}{
         $\s{rob/ss/partition},\ \s{rob/ss/frontier}\ 
         \gets 
         \s{set/enum}{},\ \left\{ \text{A bounded cell covering of }\s{ws}\right\}\times\s{set/num/circle}$ \;
         \While{$\s{rob/ss/frontier}$ is not empty}{
            $\s{rob/ss/cfg/cell}\ \gets \text{a cell in }\s{rob/ss/frontier}$ \;
            $\s{rob/ss/frontier} \ \gets\ \s{rob/ss/frontier}\setminus \s{rob/ss/cfg/cell}$\;
            \uIf{$ \s{rob/fpoa/at}{\s{rob/ss/cfg/cell}} \s{set/rel/subeq} \s{ws} $}{
                $ \s{rob/ss/partition} \gets \s{rob/ss/partition} \s{set/rel/union} \s{set/enum}{ \s{rob/ss/cfg/cell}\times \s{rob/ss/misc} } $
            }
            \ElseIf{$\s{rob/fpua/at}{\s{rob/ss/cfg/cell}} \s{set/rel/subeq} \s{ws}$}{
                \uIf{\Call{MaxWidth}{$\s{rob/ss/cfg/cell}$}$ < \s{rob/ss/partition/wthresh}$}{
                     $\s{rob/ss/frontier} \gets \s{rob/ss/frontier}\cup$\Call{SubdivideCell}{$\s{rob/ss/cfg/cell}$}
                }
                \Else{
                $ \s{rob/ss/partition} \gets \s{rob/ss/partition} \s{set/rel/union} \s{set/enum}{ \s{rob/ss/cfg/cell}\times \s{rob/ss/misc} } $
                }
            }   
        }
    }
\end{algorithm2e}
\DecMargin{1em}

\subsection{
  Over-Approximation of
  the Forward Reachable Set of the Closed-Loop System}%
\label{sec/method/reach}

Using the over-approximation \( \s{rob/ss/safe/oa} \) of
the robot's safe state space as a set of initial robot states,
in this section we employ reachability analysis techniques to
compute an over-approximation of the forward reachable set of
the closed loop system.
Since the accuracy of the over-approximation of
the forward reachable set depends on the partition \( \s{rob/ss/partition} \)
of the over-approximation of the safe set \( \s{rob/ss/safe/oa} \),
we also propose a method to refine the partition \( \s{rob/ss/partition} \)
in order improve the accuracy of
the over-approximation \( \s{rob/ss/frs/crit/oa} \).
This way we can control the tradeoff between computational complexity
affected by
the number of cells in \( \s{rob/ss/partition} \)
and
accuracy of the over-approximation \( \s{rob/ss/frs/crit/oa} \).

More specifically,
to obtain a tight over-approximation of
the set of states \( \s{rob/ss/frs/crit} \)
reachable from \( \s{rob/ss/safe/oa} \)
using the robot's closed-loop dynamics, we begin by noting that
the partition constructed by \autoref{alg/ss/partition} consists of
cells defined by \( \s{rob/state/dim} \)-dimensional intervals.
As such,
given a state space cell \( \s{rob/ss/cell} \in \s{rob/ss/safe/oa} \),
the state space cell \( \s{rob/ss/cell/frs/at}{\s{rob/ss/cell}} \)
reachable from \( \s{rob/ss/cell} \)
can be computed using \( \s{rob/f/oa} \) defined in~\eqref{eq/f/oa}
which returns a bounding box enclosing
the set of states reachable from \( \s{rob/ss/cell} \)
under the neural network controller, i.e.,
\begin{equation}
  \s{rob/ss/cell/frs/at}{\s{rob/ss/cell}; \s{nn/weight}, \s{nn/bias}} =
  \s{rob/f/oa/at}{\s{rob/ss/cell}}{
    \s{nn/oa/at}{\s{rob/ss/cell}; \s{nn/weight}, \s{nn/bias}}
  }
\end{equation}
where
\( \s{nn/oa} \) is a continuous map of
a \( \s{rob/state/dim} \)-dimensional interval to
a \( \s{rob/input/dim} \)-dimensional interval such that
\begin{eqnarray}
  \s{nn/at}{\s{rob/state}; \s{nn/weight}, \s{nn/bias}}
  \in
  \s{nn/oa/at}{\s{rob/ss/cell}; \s{nn/weight}, \s{nn/bias}}
  , ~~~
  \forall \s{rob/state} \in \s{rob/ss/cell}
  .
\end{eqnarray}
To obtain a function \( \s{nn/oa} \)
that returns a valid over-approximation of
the set of control inputs generated by the neural network \( \s{nn} \),
we employ the Interval Bound Propagation (IBP) method~\cite{50517380}.
In words,
for each cell \( \s{rob/ss/cell} \in \s{rob/ss/partition} \),
we employ the IBP method (function \( \s{nn/oa} \)) to compute bounds on
the robot's control inputs,
which we then propagate (function \( \s{rob/f/oa} \)) to obtain bounds on
the robot's forward reachable set.
Thus,
the over-approximation \( \s{rob/ss/frs/crit/oa} \) of
the set of states reachable by the closed-loop system after one step
can be obtained as
\begin{equation}%
  \label{eq/reach/crit}
  \s{rob/ss/frs/crit/oa} =
  \s{set/rel/union/multi}{\s{rob/ss/cell} \in \s{rob/ss/partition}}{}{
    \s{rob/ss/cell/frs/at}{\s{rob/ss/cell}; \s{nn/weight}, \s{nn/bias}}
  }
  .
\end{equation}

We remark that
the over-approximation error between
the bounds on the control inputs computed using IBP
and
the actual bounds on the control inputs
generated by \( \s{nn} \) for the states in \( \s{rob/ss/cell} \)
increases with the size of \( \s{rob/ss/cell} \),
as explained in~\cite{50517380}.
Therefore,
a fine partition \( \s{rob/ss/partition} \) of
the safe space over-approximation \( \s{rob/ss/safe/oa} \)
is generally required
in order to obtain a tight over-approximation of \( \s{rob/ss/frs/crit} \).
To refine \( \s{rob/ss/partition} \)
while keeping the total number of cells as low as possible,
we further subdivide only cells whose forward reachable set
intersects with the complement of \( \s{rob/ss/safe/oa} \).
To identify such cells,
we recall that
the forward reachable set 
\( \s{rob/ss/cell/frs/at}{\s{rob/ss/cell}} \) of \( \s{rob/ss/cell} \)
is a composite interval with the same dimension as \( \s{rob/ss/cell} \)
and
consider the two components
\( \s{rob/ss/cfg/cell}' \in \s{rob/ss/cfg} \)
and
\( \s{rob/ss/misc/cell}' \in \s{rob/ss/misc} \)
of \( \s{rob/ss/cell/frs/at}{\s{rob/ss/cell}} \) such that
\( \s{rob/ss/cell/frs/at}{\s{rob/ss/cell}} =
\s{rob/ss/cfg/cell}' \times \s{rob/ss/misc/cell}' \).
%
%
Following the procedure presented in \autoref{sec/method/feas},
we can compute
an over-approximation \( \s{rob/fpoa/at}{\s{rob/ss/cfg/cell}'} \)
of the robot's footprint
corresponding to
the forward reachable set of configurations \( \s{rob/ss/cfg/cell}' \).
Therefore,
a cell \( \s{rob/ss/cell} \) in \( \s{rob/ss/partition} \)
only needs to be subdivided if
\( \s{rob/fpoa/at}{\s{rob/ss/cfg/cell}'} \) intersects with
the complement \( \s{ws/cmpl} \) of the workspace \( \s{ws} \)
or
\( \s{rob/ss/misc/cell}' \) intersects with
the complement \( \s{rob/ss/misc/compl} \) of \( \s{rob/ss/misc} \).

\IncMargin{1em}
\begin{algorithm2e}
    \caption{Verify whether an outer approximation of the robot's one-step reachable set 
    violates safety requirements by more than margins $ \s{rob/ss/refine/vphresh},\ \s{rob/ss/refine/vqhresh}$ anywhere in cell $\s{rob/ss/cell}$.\label{alg/ss/refinement/test}}%
    \Fn{$A=$\Call{PenaltyTest}{$\s{rob/ss/cell},\ \s{rob/ss/refine/vphresh},\ \s{rob/ss/refine/vqhresh};\ \s{ws},\ \s{rob/body},\ \s{rob/ss/misc},\ f$}}{
    $ A,\ \s{rob/ss/cell}' \gets$ false, $\s{rob/ss/cell/frs/at}{\s{rob/ss/cell}}\ $\;
     \If{$\s{set/rel/vol}{ \s{rob/fpoa/at}{\s{rob/ss/cfg/cell}'} \s{set/rel/isect} \s{ws/compl} } >\s{rob/ss/refine/vphresh}\ \mathrm{\mathbf{or} }\ \s{set/rel/vol}{ \s{rob/ss/misc/cell}' \s{set/rel/isect} \s{rob/ss/misc/compl} }>\s{rob/ss/refine/vqhresh}$}
{
    $A\ \gets\ $ true \;
}
    }
\end{algorithm2e}
\DecMargin{1em}

\IncMargin{1em}
\begin{algorithm2e}
    \caption{Refine $\s{rob/ss/partition}$ and remove cells where the robot's one-step reachable set may violate safety requirements beyond specified tolerances.\label{alg/ss/refinement}}%
    \Fn{$\s{rob/ss/partition}'=$\Call{RefinePrtn}{$\s{rob/ss/partition},\ \s{rob/ss/partition/wthresh},\ \s{rob/ss/refine/vphresh},\ \s{rob/ss/refine/vqhresh};\ \s{ws},\ \s{rob/body},\ \s{rob/ss/misc},\ f$}}{
    %
    $\s{rob/ss/frontier},\ \s{rob/ss/partition}' \gets \s{rob/ss/partition},\ \emptyset $ \;
     \While{$\s{rob/ss/frontier}$ is not empty}
     {
        $\s{rob/ss/cell}\ \gets \text{a leaf cell in the last level of }\s{rob/ss/frontier}$ that has not yet been examined\;
        \uIf{ \Call{PenaltyTest}{$\s{rob/ss/cell},\ \s{rob/ss/refine/vphresh},\ \s{rob/ss/refine/vqhresh};\ \s{ws},\ \s{rob/body},\ \s{rob/ss/safe},\ f$} }
        {
            \uIf{\Call{MaxWidth}{$\s{rob/ss/cfg/cell}$}$ > \s{rob/ss/partition/wthresh}$}
            {
                $\s{rob/ss/frontier} \gets \s{rob/ss/frontier}\cup \left( \text{\Call{SubdivideCell}{$\s{rob/ss/cfg/cell}$}} \times \s{rob/ss/misc/cell} \right)$ \tcp*{Children $\s{rob/ss/cell}$}
            }
            \Else{
                $\s{rob/ss/frontier}\ \gets\ \s{rob/ss/frontier} \setminus \s{rob/ss/cell}$ \;
            }
        }
        \Else{
            $\s{rob/ss/cell}^{sib},\ \s{rob/ss/cell}^{par} \gets $ \Call{GetSibling}{$\s{rob/ss/cell}$}, \Call{GetParent}{$\s{rob/ss/cell}$}\;
            \uIf{\Call{PenaltyTest}{$\s{rob/ss/cell}^{sib},\ \s{rob/ss/refine/vphresh},\ \s{rob/ss/refine/vqhresh};\ \s{ws},\ \s{rob/body},\ \s{rob/ss/safe},\ f$}}{
            \uIf{\Call{MaxWidth}{$\s{rob/ss/cfg/cell}^{sib}$}$ < \s{rob/ss/partition/wthresh}$}
            {
                $\s{rob/ss/frontier} \gets \s{rob/ss/frontier}\cup \left({\text{\Call{SubdivideCell}{$\s{rob/ss/cfg/cell}^{sib}$}} \times \s{rob/ss/misc/cell}^{sib}} \right)$ \tcp*{Children of $\s{rob/ss/cell}^{sib}$}
            }
            \Else{
                $\s{rob/ss/frontier}\ \gets\ \s{rob/ss/frontier} \setminus \s{rob/ss/cell}$ \;
            }
            }
            \uElseIf{
                $\s{rob/ss/cell}^{sib}=\emptyset \ \mathrm{\mathbf{or}}$  \Call{CannotMerge}{$\s{rob/ss/cell},\s{rob/ss/cell}^{sib}$}
            }{
                $\s{rob/ss/frontier} \gets (\s{rob/ss/frontier}\setminus \{\s{rob/ss/cell},\s{rob/ss/cell}^{sib},\s{rob/ss/cell}_{par}\} )$ \;
                $\s{rob/ss/partition}' \gets \s{rob/ss/partition}' \cup \s{rob/ss/cell} \cup \s{rob/ss/cell}^{sib}$ \;
            }
            \Else{
                $\s{rob/ss/frontier} \gets (\s{rob/ss/frontier}\setminus \{\s{rob/ss/cell},\s{rob/ss/cell}^{sib},\s{rob/ss/cell}_{par}\} )\cup $ \Call{Merge}{$\s{rob/ss/cell}^{sib},\s{rob/ss/cell}$} \tcp*{Replace $\s{rob/ss/cell}_{par}$}
            }
        }
      }
    }
\end{algorithm2e}
\DecMargin{1em}

The process described above is
outlined in \autoref{alg/ss/refinement} which
adaptively subdivides cells in \( \s{rob/ss/partition} \)
with large over-approximation errors of their forward reachable sets.
Specifically,
for each cell
\( \s{rob/ss/cell} = (\s{rob/ss/cfg/cell}, \s{rob/ss/misc/cell}) \)
in
\( \s{rob/ss/partition} \),
we check whether
the area of \( \s{ws/compl} \) covered by
\( \s{rob/fpoa/at}{\s{rob/ss/cfg/cell}'} \)
or
the volume of \( \s{rob/ss/misc/compl} \) covered by
\( \s{rob/ss/misc/cell}' \)
are greater than user specified thresholds
\( \s{rob/ss/refine/vphresh} \) and \( \s{rob/ss/refine/vqhresh} \),
respectively. 
%
If these conditions hold and the size of the cell admits further subdivision,
then the cell gets split and replaced by smaller ones. 
%
Otherwise,
the sibling \( \s{rob/ss/cell}^{sib} \) of cell \( \s{rob/ss/cell} \)
is retrieved%
\footnote{
  Notice that, since each cell can be split only once into two subcells,
  \( \s{rob/ss/partition} \) can be represented as a binary tree.
}
and
if the volume of their parent cell \( \s{rob/ss/cell}_{mrg} \)
that lies outside the set of safe state \( \s{rob/ss/safe} \) is negligible,
then the cells \( \s{rob/ss/cell} \) and \( \s{rob/ss/cell}^{sib} \)
are merged
in order to reduce the size of \( \s{rob/ss/partition} \). 
%
Finally,
the over-approximation \( \s{rob/ss/frs/crit/oa} \) can be obtained as
the union of the forward reachable set of
each cell in \( \s{rob/ss/partition} \).
\autoref{alg/ss/refinement} effectively controls the tradeoff between
computational complexity affected by
the number of cells in the partition \( \s{rob/ss/partition} \)
and
accuracy of the over-approximation \( \s{rob/ss/frs/crit/oa} \).

In the following theorem we show that \autoref{alg/ss/refinement} terminates in finite time and its worst case run-time is related to the size of the workspace, the number of intervals in $\s{rob/ss/misc}$, and the reciprocal of the volume of the smallest allowable cell. Moreover, we show that the resulting cell partition $\s{rob/ss/partition}'$ inherits the safety assurances of $\s{rob/ss/partition}$ and that safety violations of its one-step reachable set are bounded by a user-specified tolerance.

\begin{theorem}\label{thm:FinalPartition}
Suppose \autoref{alg/ss/partition} is applied to a compact set $\s{ws} \s{set/rel/sub} \s{set/num/real}^{2}$ and let $\s{rob/ss/partition}=\s{rob/ss/partition}_{[p]} \times \s{rob/ss/misc}$ denote its output. Consider also the mapping $\s{rob/body}:\s{set/num/real}^{2}\times[0,2\pi) \mapsto \s{set/Realsets}(\s{set/num/real}^{2})$, the tolerance $\s{rob/ss/partition/wthresh}>0$, and assume that the assumptions of \autoref{thm:InitialPartition} hold. Suppose that \autoref{alg/ss/refinement} is applied to $\s{rob/ss/partition}$ with discrete-time forward dynamics governed by Equations \ref{eq/sys}-\ref{eq/nn/def/layer} and tolerances $\s{rob/ss/refine/vphresh},\s{rob/ss/refine/vqhresh}>0$. Then, \autoref{alg/ss/refinement} terminates after $\#\left( \s{rob/ss/misc}\right)\s{set/rel/vol}{\s{rob/ss/partition}_{[p]}}\s{rob/ss/partition/wthresh}^{-3}$ repetitions of its while loop and the output, $\s{rob/ss/partition}'$, satisfies 
$\cup_{\s{rob/ss/cell}\in \s{rob/ss/partition}'}\s{rob/ss/cell}  \subseteq  \s{rob/ss/safe}\oplus (\mathcal{C}_{\s{rob/ss/partition/wthresh}}\times \mathrm{0})$ and
$ \cup_{\s{rob/ss/cell}\in \s{rob/ss/partition}'}\s{rob/ss/cell}  \cup_{p\in \s{rob/ss/cfg/cell}} \s{rob/body}(p) \subseteq \s{ws}\oplus\mathrm{B}_{v}$. Moreover, if $\s{rob/ss/cell}'=\s{rob/ss/cell/frs/at}{\s{rob/ss/cell}}$ for $\s{rob/ss/cell}\in\s{rob/ss/partition}'$, then $\s{set/rel/vol}{ \s{rob/fpoa/at}{\s{rob/ss/cfg/cell}'} \s{set/rel/isect} \s{ws/compl} } \leq\s{rob/ss/refine/vphresh}$ and $\s{set/rel/vol}{ \s{rob/ss/misc/cell}' \s{set/rel/isect} \s{rob/ss/misc/compl} }\leq\s{rob/ss/refine/vqhresh}$.
\end{theorem}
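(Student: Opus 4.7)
The plan is to establish the three claims of the theorem—termination with the stated iteration bound, inheritance of the two safety inclusions from \autoref{thm:InitialPartition}, and the per-cell bound on the one-step reachable set violations—using arguments analogous to those in the proof of \autoref{thm:InitialPartition} together with a careful case analysis of the branches of \autoref{alg/ss/refinement}.

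For termination, I would classify each iteration of the while loop by which branch is taken: (i) subdivision of a cell into two children, which increases the cell count by one while keeping $\s{set/rel/vol}{\s{rob/ss/frontier}}$ constant; (ii) removal of a cell that is already at minimum width, which decreases the frontier volume by at least $\s{rob/ss/partition/wthresh}^{3}$; (iii) transfer of a pair $\{\s{rob/ss/cell},\s{rob/ss/cell}^{sib}\}$ to $\s{rob/ss/partition}'$ via the ``cannot merge'' branch, which strictly decreases the number of cells in the frontier; and (iv) merging two siblings into their parent, which decreases the cell count by one while keeping frontier volume constant. As in \autoref{thm:InitialPartition}, the total number of possible leaves of the binary subdivision tree in the configuration projection is bounded by $\s{set/rel/vol}{\s{rob/ss/partition}_{[p]}}\s{rob/ss/partition/wthresh}^{-3}$; multiplying by $\#(\s{rob/ss/misc})$ accounts for the misc-state component, producing the claimed bound on the total number of outer iterations.

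For the two safety inclusions, I would observe that subdivision, merging, and removal all map a collection of cells to one whose union is a subset of the union of the original. Thus the containments $\s{rob/ss/safe}\oplus(\mathcal{C}_{\s{rob/ss/partition/wthresh}}\times \mathrm{0})$ and $\s{ws}\oplus\mathrm{B}_{v}$ established in \autoref{thm:InitialPartition} for the input $\s{rob/ss/partition}$ are automatically inherited by $\s{rob/ss/partition}'$. For the per-cell reachable-set bound, I would trace the only branch by which a cell lands in $\s{rob/ss/partition}'$: this occurs exclusively in the ``cannot merge'' case, where both $\s{rob/ss/cell}$ and $\s{rob/ss/cell}^{sib}$ have produced \textsc{PenaltyTest}\ $=$ false at the moment of transfer. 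Unpacking the definition in \autoref{alg/ss/refinement/test} yields the requested bounds on $\s{set/rel/vol}{\s{rob/fpoa/at}{\s{rob/ss/cfg/cell}'}\s{set/rel/isect}\s{ws/compl}}$ and $\s{set/rel/vol}{\s{rob/ss/misc/cell}'\s{set/rel/isect}\s{rob/ss/misc/compl}}$.

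The main obstacle is controlling the interaction between merging and re-subdivision, since a merged parent must be re-examined and could conceivably fail \textsc{PenaltyTest}, triggering re-subdivision back into its original children and threatening a cycle. To rule out such oscillations I would rely on the monotonicity of the IBP bounds together with the \textsc{CannotMerge} predicate: when children pass \textsc{PenaltyTest} but the merged parent would fail it, \textsc{CannotMerge} must be designed to detect this and direct those children to $\s{rob/ss/partition}'$ rather than attempting a merge. Making this precise—and verifying that the potential function combining the number of frontier cells and their total volume strictly decreases over every outer iteration rather than merely over most of them—is the principal technical step beyond the arguments already developed for \autoref{thm:InitialPartition}.
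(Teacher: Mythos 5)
Your treatment of the two inclusion claims and of the per-cell reachable-set bound matches the paper's proof essentially verbatim: every cell of $\s{rob/ss/partition}'$ is contained in a cell of $\s{rob/ss/partition}$, so the inclusions of \autoref{thm:InitialPartition} are inherited, and cells enter $\s{rob/ss/partition}'$ only after \Call{PenaltyTest}{$\cdot$} returns false, which unpacks to the stated volume bounds. The gap is in the complexity/termination argument, which you yourself flag as unresolved. Your proposed repair is to redesign or reinterpret \Call{CannotMerge}{$\cdot$} so that it detects when a merged parent would fail \Call{PenaltyTest}{$\cdot$} and diverts the children to $\s{rob/ss/partition}'$; but that is not what the predicate means in the paper (it tests whether the parent cell has non-negligible volume outside $\s{rob/ss/safe}$, not anything about the one-step reachable set), and the paper's proof never invokes any such property, nor any ``monotonicity of IBP bounds.''

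What the paper does instead is a pure counting argument on leaves of the frontier, exploiting the fact that subdivision acts only on the configuration dimensions: for each fixed interval $\mathcal{Z}\in\s{rob/ss/misc}$, the leaf cells whose misc component equals $\mathcal{Z}$ have configuration projections that overlap only on their boundaries and are contained in $\s{rob/ss/partition}_{[p]}$, and since \Call{MaxWidth}{$\s{rob/ss/cfg/cell}$}$\geq\s{rob/ss/partition/wthresh}$ there can be at most $\s{set/rel/vol}{\s{rob/ss/partition}_{[p]}}\s{rob/ss/partition/wthresh}^{-3}$ of them, hence at most $L_{max}=\#\s{lrp}{\s{rob/ss/misc}}\s{set/rel/vol}{\s{rob/ss/partition}_{[p]}}\s{rob/ss/partition/wthresh}^{-3}$ leaves in $\s{rob/ss/frontier}$ at any time. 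Each repetition of the while loop either increases the leaf count by one (a subdivision of $\s{rob/ss/cell}$ or its sibling) or decreases it by at least one (removal, transfer to $\s{rob/ss/partition}'$, or merge), so additions and removals are each bounded by $L_{max}$, which gives the stated bound; note also that the algorithm examines only leaf cells ``that have not yet been examined,'' which is what rules out the subdivide--merge oscillation you were worried about, rather than any property of \Call{CannotMerge}{$\cdot$}. Your potential-function sketch (cell count plus frontier volume) does not strictly decrease under subdivision and merging, as you observed, so without the paper's leaf-count bookkeeping (or the examine-once rule) your termination argument does not close.
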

\begin{proof}
Most assertions in this proof follow from observing that if $\s{rob/ss/cell}' \in \s{rob/ss/partition}'$, then $\s{rob/ss/cell}' \subseteq \s{rob/ss/cell}\in \s{rob/ss/partition}$, and by following the arguments in the proof of \autoref{thm:InitialPartition}. The bound on the one-step reachable set's safety violations holds because cells are only added to $\s{rob/ss/cell}'$ if \autoref{alg/ss/refinement/test} returns false. Only the complexity bound requires additional discussion.

To establish the complexity bound, observe that for any cell $\s{rob/ss/cell}'\in\s{rob/ss/frontier}$, we have that $\s{rob/ss/cell}'_{[p]} \subseteq \s{rob/ss/cell}_{p}\in \s{rob/ss/partition}_{[p]}$ and $\s{rob/ss/cell}'_{[q]}=\s{rob/ss/cell}_{q}\in \s{rob/ss/misc}$ because cells are only added to $\s{rob/ss/frontier}$ by subdividing their configuration dimensions. For any interval $\mathcal{Z}\in \s{rob/ss/misc}$, note that
\begin{align*}
    V_{\mathcal{Z}}=\s{set/rel/vol}{\bigcup_{\s{rob/ss/cell}\in \mathcal{X}_{\mathcal{Z}} } \s{rob/ss/cell}_{[p]} } ,
\end{align*}
where $\mathcal{X}_{\mathcal{Z}} = \left\{ \s{rob/ss/cell}\in \s{rob/ss/frontier} | \text{\Call{IsLeaf}{$\s{rob/ss/cell}$} = true, and }\s{rob/ss/cell}_{[q]}=\mathcal{Z} \right\}$ remains constant when cells are subdivided or merged, or decreases as cells are removed. Initially, 
$V_{\mathcal{Z}} =\s{set/rel/vol}{\s{rob/ss/partition}_{[p]}}$ for all $\mathcal{Z}\in \s{rob/ss/misc}$, which bounds $V_{\mathcal{Z}}$ above at all iterations. Observe that due to the subdivision operations, if $\s{rob/ss/cell}^1,\s{rob/ss/cell}^2\in\mathcal{X}_{\mathcal{Z}}$, then $\s{rob/ss/cfg/cell}^1$, and $\s{rob/ss/cfg/cell}^2$ can only overlap on their boundaries. Then \Call{MaxWidth}{$\s{rob/ss/cfg/cell}$}$ \geq \s{rob/ss/partition/wthresh}$ implies there can be at most $n_{\mathcal{Z}}= \s{set/rel/vol}{\s{rob/ss/partition}_{[p]}}\s{rob/ss/partition/wthresh}^{-3} $ cells in $\mathcal{X}_{\mathcal{Z}}$, and subsequently at most $L_{max}=\#\left( \s{rob/ss/misc}\right)\s{set/rel/vol}{\s{rob/ss/partition}_{[p]}}\s{rob/ss/partition/wthresh}^{-3} $ leaves in $\s{rob/ss/frontier}$. At each repetition of the while loop, a leaf cell, $\mathcal{C}\in \s{rob/ss/frontier}$, is examined and one of the following happens:
\begin{itemize}
    \item the number of leaves increases by one because $\mathcal{C}$ or its sibling is subdivided; or
    \item the number of leaves decreases by one or more because either $\mathcal{C}$, its sibling, and/or its parent are removed, or because $\mathcal{C}$ and its sibling are merged and replace their parent.
\end{itemize}
The required bound is found by observing that leaves can only be added at most $L_{max}$ times and removed at most $L_{max}$ times, which completes the complexity argument.
\end{proof}
\subsection{Safety-Aware Control Training}%
\label{sec/method/optim}

This section uses the over-approximation of the forward reachable set of
the closed loop system \( \s{rob/ss/frs/crit/oa} \),
to design appropriate penalty terms which, 
when added to the loss function~\eqref{eq/optim/base/costfun}, 
minimize the overlap between this forward reachable set and
the set of unsafe states.
In this way,
we can reduce safety violations after re-training of
the neural network controller~\eqref{eq/nn/def}.
To do so,
we solve the following optimization problem at
every iteration of \autoref{alg/main} to update the neural network parameters
%
\begin{equation}%
  \label{eq/optim/aug}
  \begin{aligned}
    & \underset{
      \s{nn/weight}, \s{nn/bias}
    }{\text{minimize}}
    & & \s{ml/aug/costfun/at}{\s{nn/weight}, \s{nn/bias}} \\
  \end{aligned}
\end{equation}
\begin{equation}
  \label{eq/optim/aug/costfun}
  \s{ml/aug/costfun/at}{\s{nn/weight}, \s{nn/bias}} =
  \s{ml/base/costfun/at}{\s{nn/weight}, \s{nn/bias}}
  +
  \s{ml/aug/coef/safety}
  \cdot
  \s{ml/aug/safety/penalty/at}{\s{nn/weight}, \s{nn/bias}},
\end{equation}
where
\( \s{ml/aug/coef/safety} \) is a positive constant
and
\( \s{ml/aug/safety/penalty} \)
a penalty term that measures safety violations.
The goal in designing the penalty term \( \s{ml/aug/safety/penalty} \)
is to push
the over-approximation of the new forward reachable set \( \s{rob/ss/frs/crit/oa} \)
inside the under-approximation
\( \s{rob/ss/safe/ua} =
\s{set/rel/union}_{\s{rob/ss/cell} \in \s{rob/ss/partition/safe}}{
  \s{rob/ss/cell} }
\)
of the set of safe states \( \s{rob/ss/safe} \),
where
\(
\s{rob/ss/partition/safe}
\)
denotes the subset of the partition \( \s{rob/ss/partition} \)
consisting of only safe cells.
To do so,
we define the penalty term \( \s{ml/aug/safety/penalty} \) as
%
$
  \label{eq/ml/aug/safety/penalty}
  \begin{aligned}
    \s{ml/aug/safety/penalty/at}{\s{nn/weight}, \s{nn/bias}}
    =
    \sum\nolimits_{\s{rob/ss/cell}' \in \s{rob/ss/partition/frs}}{
      \s{ml/aug/metric/at}{
        \s{rob/ss/cell}', \s{rob/ss/partition/safe}
      }
    }^{2}
    ,
  \end{aligned}
$
where
\(
\s{rob/ss/partition/frs}
=
\s{set/def}{ \s{rob/ss/cell/frs/at}{\s{rob/ss/cell}} }{
  \s{rob/ss/cell} \in \s{rob/ss/partition}
}
\)
and
\( \s{ml/aug/metric/at}{\s{rob/ss/cell}, \s{rob/ss/partition/safe}} \)
is a valid metric of the volume occupied by
the intersection of 
the composite interval \( \s{rob/ss/cell} \)
and
\(
\s{set/rel/union}_{\s{rob/ss/cell} \in \s{rob/ss/partition/safe}}{
  \s{rob/ss/cell}
}
\).
Thus,
%
%
\( \s{ml/aug/safety/penalty} \) vanishes only if
\(
\s{rob/ss/cell}' \s{set/rel/subeq}
\s{set/rel/union}_{\s{rob/ss/cell} \in \s{rob/ss/partition/safe}}{
  \s{rob/ss/cell}
}
\)
for all \( \s{rob/ss/cell}' \in \s{rob/ss/partition/frs} \),
i.e.,
the set of safe states is rendered invariant under the closed-loop dynamics.
Note that the overlap between the forward reachable set and
the set of unsafe states
\( \s{rob/ss/frs/crit} \s{set/rel/diff} \s{rob/ss/safe} \)
also provides numerical bounds on possible safety violations by
the closed loop system,
that can be used as a measure of reliability of
the neural network controller~\eqref{eq/nn/def}.

\section{Numerical Experiments}%
\label{sec/sims}
In this section, we illustrate the proposed method through numerical experiments involving a rectangular shaped robot that operates inside
the workspace depicted in \autoref{fig/sim1/ws}.
\begin{figure}
  \centering
  \includegraphics[width=0.9\linewidth]{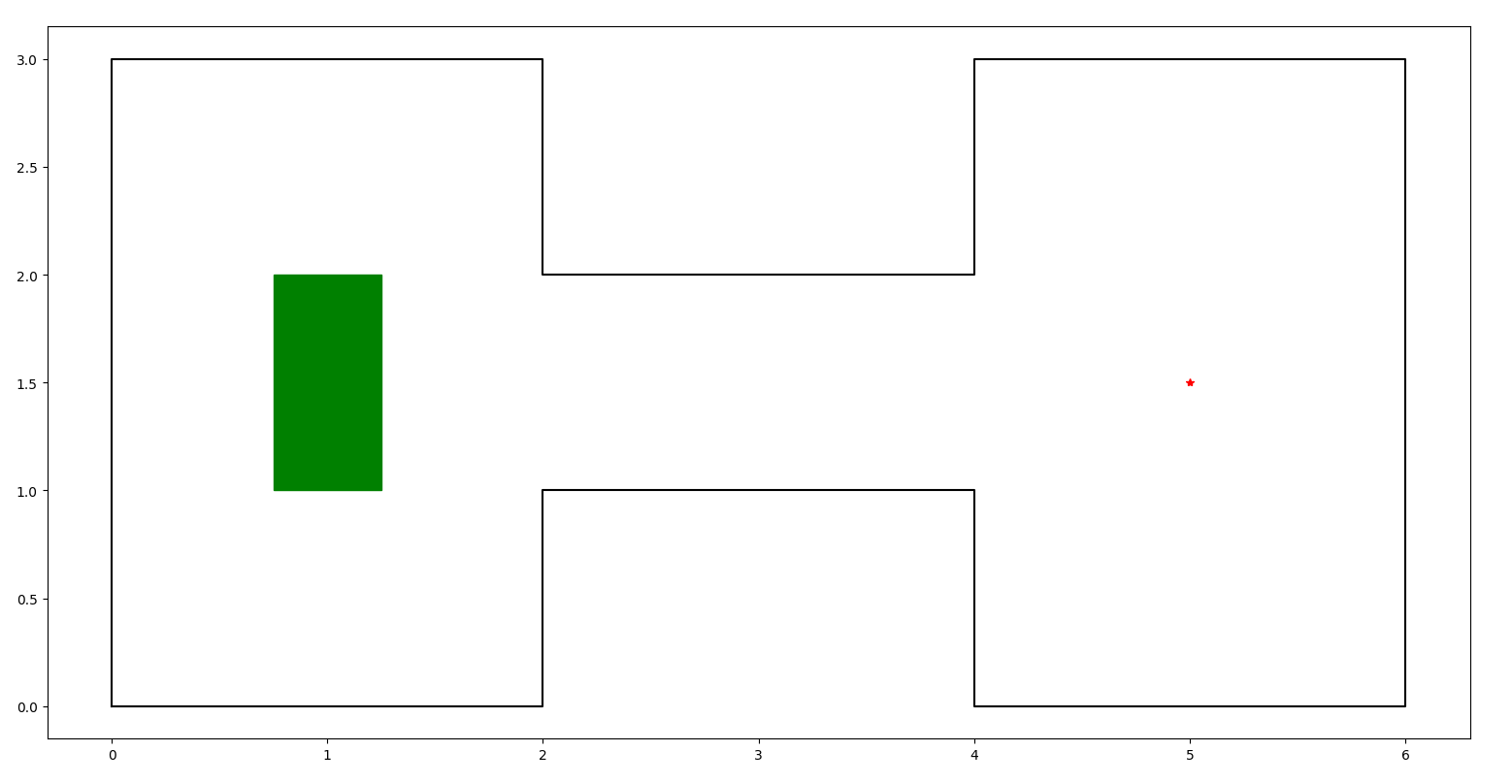}
  \caption{
    Workspace, consisting of two rooms and a corridor, and
    the robot's body at default orientation (green rectangle).
    The desired position of the robot's reference point is placed at
    the room on the right (red point).
  }%
  \label{fig/sim1/ws}
\end{figure}
The robot is assumed to be holonomic obeying the simple kinematic model
\begin{equation}
  \s{rob/state/at}{k+1} =
  \s{rob/state/at}{k} + K \cdot \s{nn/at}{\s{rob/state/at}{k}}
  ,
\end{equation}
with \( \s{rob/state} = \s{rob/cfg} \) and \( K = 0.01 \).
The dataset \( \s{data/set} \)
was assembled by computing 500 feasible trajectories
using the RRT$^{\star}$ algorithm~\cite{75318386}
and initializing the robot at random configurations.
Particularly,
each point \( (\s{rob/state}, \s{rob/input}) \in \s{data/set} \)
corresponds to a sampled trajectory point
\( (\s{rob/state}, \s{rob/input}') \),
where the control input \( \s{rob/input}' \) is modified to ensure that the implicitly defined vector field vanishes at
the goal configuration, i.e.,
\begin{equation}
  \s{rob/input} =
  10 \cdot
  \frac{\s{rob/input}'}{ \| \s{rob/input}' \| }
  \frac{ \| \s{rob/state/goal} - \s{rob/state} \| }{
    1 - \| \s{rob/state/goal} - \s{rob/state} \|
  }.
\end{equation}
The over-approximation of the robot's forward reachable set for
a given cell \( \s{rob/ss/cell} \) is computed as
the Minkowski sum between that cell and
the controller's output reachable set for that cell,
i.e.,
\begin{equation}
  \s{rob/f/oa/at}{\s{rob/ss/cell}}{\s{nn/oa/at}{\s{rob/ss/cell}}} =
  \s{rob/ss/cell}
  \oplus
  \s{lrp}{K \cdot \s{nn/oa/at}{\s{rob/ss/cell}}}
  .
\end{equation}
%
%
Additionally,
the metric \( \s{ml/aug/metric} \) used
in the evaluation of the penalty term \( \s{ml/aug/safety/penalty} \)
is defined by
\begin{equation}
	\label{eqn:V_function}
  \s{ml/aug/metric/at}{\s{rob/ss/cell}, \s{rob/ss/partition/safe}} =
  \s{lrp}{\s{set/rel/vol/scaled}{\s{rob/ss/cell}}{\delta}}^{1/3}
  -
  \s{lrp}{
    \sum_{\s{rob/ss/cell}' \in \s{rob/ss/partition/safe}}{
      \s{set/rel/vol/scaled}
      {\s{rob/ss/cell} \s{set/rel/isect} \s{rob/ss/cell}'}
      {\delta}}
  }^{1/3}
  ,
\end{equation}
where
\(
\delta = [
\delta_{\s{rob/pos/x}} = 1,
\delta_{\s{rob/pos/y}} = 1,
\delta_{\s{rob/angle}} = 1 / 2\pi
]
\)
are constant scaling factors.
Particularly,
the cubic root of the volume of composite intervals in \eqref{eqn:V_function} is used because, in practice, it prevents terms corresponding to slim cells from vanishing too fast when, e.g.,  only one of their dimensions has grown small whereas
the others have not.
The scaling factors 
\( \delta_{\s{rob/pos/x}}, \delta_{\s{rob/pos/y}}, \delta_{\s{rob/angle}} \)
allow to assign different weights to each dimension,
given that they have incompatible units
(\(\s{rob/pos/x}\) and \(\s{rob/pos/y}\) are in meters
and
\( \s{rob/angle} \) is given in rad).



\begin{table}
  \begin{center}
    \begin{tabular}{l|rr|rr}
      \hline
      NN & Shape & \(\epsilon_w\) & \multicolumn{2}{c}{Partition Size} \\
         &&& Initial & Refined \\
      \hline
      \(\phi_{1}\) & 3x50x50x50x3 & (0.1, 0.1, 0.2\(\pi\))   & 5434 & 6990 \\
      \(\phi_{2}\) & 3x50x50x50x3 & (0.25, 0.25, 0.2\(\pi\)) & 1046 & 1180 \\
      \(\phi_{3}\) & 3x50x50x3    & (0.1, 0.1, 0.2\(\pi\))   & 5434 & 5536 \\
      \hline
    \end{tabular}
  \end{center}
  \caption{
    Shape and subdivision thresholds used in the re-training of each
    neural network controller, along with the number of cells in
    the corresponding partitions \( \s{rob/ss/partition} \)
    after their construction and initial refinement.
  }
  \label{tbl/sim1/defs}
\end{table}



\begin{table}
\begin{center}
\begin{tabular}{l|rr|rr}
\hline
NN & \multicolumn{2}{c}{\(\lambda_{S}\)} & \multicolumn{2}{c}{\(J(w,b)\)} \\
& Step Size & Final Value & Initial Value & Final Value \\
\hline
\(\phi_{1}\) & \(1.0e{-4}\) & \(5.0e{-3}\) & 3.4200 & 3.6659 \\
\(\phi_{2}\) & \(2.0e{-4}\) & \(1.0e{-2}\) & 3.4200 & 3.6114 \\
\(\phi_{3}\) & \(4.0e{-5}\) & \(2.0e{-3}\) & 3.5345 & 3.7948 \\
\hline
\end{tabular}
\end{center}
  \caption{
    Step size and final value of
    the penalty coefficient \( \s{ml/aug/coef/safety} \) as well as
    the value of the loss function \( \s{ml/base/costfun} \)
    before and after re-training for each respective neural network controller.
  }
  \label{tbl/sim1/evol}
\end{table}

Three separate neural network controllers
\( \s{nn}_1, \s{nn}_2, \s{nn}_3 \)
with various shapes
were trained to safely steer the robot to the desired configuration
using the proposed methodology using different values for
the subdivision threshold \( \s{rob/ss/partition/wthresh} \).
The activation functions of the hidden layers of all three neural network controllers were selected to be hyperbolic tangent functions. The shapes of these neural networks and the values of the subdivision thresholds $\epsilon_w$  used for safe training of each controller can be seen in Table~\ref{tbl/sim1/defs}. Also, the parameters
\( \s{ml/base/coef/err} \) and \( \s{ml/base/coef/reg} \)
in~\eqref{eq/optim/base/costfun}
were set to
\( 1 / |\s{data/set}| \) and
\( 1 / (|\s{nn/weight}| + |\s{nn/bias}|) \),
respectively,
with
\( |\s{data/set}| \) denoting the number of data points in \( \s{data/set} \)
and
\( |\s{nn/weight}|, |\s{nn/bias}| \) denoting
the dimension of the corresponding neural network controller's
weight and bias vectors.
After initially training each network using the data in $\mathcal{D}$, 
we used \autoref{alg/ss/partition}
to obtain an initial partition \( \s{rob/ss/partition} \) of
the robot's configuration space
with the subdivision threshold
\( \s{rob/ss/partition/wthresh} =
\s{lrp}{
  {\s{rob/ss/partition/wthresh}}_x, {\s{rob/ss/partition/wthresh}}_y,
  {\s{rob/ss/partition/wthresh}}_{\theta}} \)
for each respective dimension.
Then,
we applied
\autoref{alg/ss/refinement} to refine the initial partition
\( \s{rob/ss/partition} \) based on the system's forward reachable sets
with
\( \s{rob/ss/refine/vphresh} = 1e-2 \).
The number of cells in the partition \( \s{rob/ss/partition} \)
associated with each neural network controller before and after
the first refinement can be seen in \autoref{tbl/sim1/defs}.
We remark that
blindly partitioning
the configuration space \( \s{ws} \times \s{set/num/circle} \) into cells
with dimensions \( 0.1 \times 0.1 \times 0.2\pi \)
would result in \( 14000 \) cells or more,
which implies that the proposed method requires
\( 1 - 6990 / 14000 \approx 50 \% \)
and
\( 1 - 5536 / 14000 \approx 60 \% \)
less subdivisions
for the cases of \( \s{nn}_1 \) and \( \s{nn}_3 \),
respectively.
Among these cells,
the active ones (i.e.,
cells whose forward reachable sets violate the safety specifications)
were used to evaluate the penalties at each iteration.
Finally,
50 iterations of the refinement and parameter update steps
of \autoref{alg/main} were performed.
Similarly to the approach proposed in~\cite{68156223}
where the size of the \(\epsilon\)-neighborhoods was gradually increased
to improve numerical stability of the retraining procedure,
the parameter \( \s{ml/aug/coef/safety} \) in~\eqref{eq/optim/aug/costfun}
was initialized at \(0\)
and
its value was increased
at the beginning of each iteration of \autoref{alg/main}
until a target value was reached at the 50th iteration.
The selected step size and
the corresponding target value of the parameter \( \s{ml/aug/coef/safety} \)
used in the re-training of each controller are listed in
\autoref{tbl/sim1/evol}.

\autoref{fig/sim1/vol} shows the evolution of
the total volume of \( \s{rob/ss/frs/crit/oa} \)
lying outside the under-approximation of the set of safe states,
which is given by
\( \s{rob/ss/partition/safe} \), i.e.,
\(
\s{set/rel/vol}{
  \s{lrp}{
    \s{set/rel/union}_{\s{rob/ss/cell} \in \s{rob/ss/partition}}{
      \s{rob/ss/cell/frs/at}{\s{rob/ss/cell}}
    }
  }
  \s{set/rel/isect}
  \s{lrp}{
    \s{set/rel/union}_{\s{rob/ss/cell} \in \s{rob/ss/partition/safe}}{
      \s{rob/ss/cell}
    }
  }
}
\)
whereas
\autoref{fig/sim1/ace} depicts the number of active cells at
the end of each epoch, with epoch \(0\) corresponding to
the initial training of the controllers (\( \s{ml/aug/coef/safety} = 0 \)).
After re-training,
the volume of  $\bar{\mathcal{Z}}_c$,
which bounds the volume of initial states that may escape the safe set
under the closed-loop dynamics,
and the number of active cells decreased by
\( 84.6 \%, 54.4 \%, 30.9 \%, \)
and
\( 31.6 \%, 15.3 \%, 9.5 \% \),
respectively,
for each of the three neural network controllers.
Additionally,
the values of the loss function \( \s{ml/base/costfun} \)
before the first epoch and after the last one
are listed in \autoref{tbl/sim1/evol}.
We note that
in all three cases
the value of the loss function \( \s{ml/base/costfun} \) increases,
which is to be expected as the proposed framework prioritizes
safety over liveliness of the neural network controller.
%

The projections of the cells in each partition
onto the \(\s{rob/pos/x}\s{rob/pos/y}\)-plane along with
the over-approximations of their forward reachable sets
before and after retraining can be seen in
\autoref{fig/sim1/rs/nn1}-\autoref{fig/sim1/rs/nn3}.
We remark that
the area covered by
the over-approximation of the robot's forward reachable set (colored in red)
that lies outside the set of initial states (colored in blue)
has noticeably decreased at the end of \( 50 \)-th epoch,
especially for the case of the neural network controller \( \s{nn}_1 \).
%
%
%
This area provides spatial bounds on possible safety violations induced by the trained controllers. As a result, while our method allows for safety violations, it does so with grace, by providing such bounds that can be controlled during training. Finally, 
several trajectories executed by the robot equipped with
the controller \( \s{nn}_1 \) before and after 50 iterations of re-training
are illustrated in \autoref{fig/sim1/traj}.
%
As can be seen in
subfigure~\ref{fig/sim1/traj/1},
the proposed re-training method results in safer robot trajectories than
the ones returned by the initially trained neural network controller,
whereas the behavior of the closed-loop system at states away from
the boundary of the safe set is not noticeably affected,
as indicated by subfigure~\ref{fig/sim1/traj/5}.

\begin{figure}
\vspace{-2mm}
  \centering
  \includegraphics[width=0.9\linewidth]{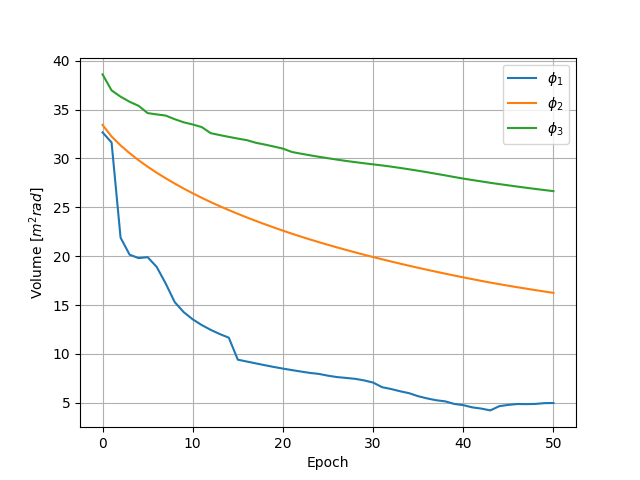}
  \vspace{-2mm}
  \caption{
    Total volume of active cells that lies outside the set of safe states
    at the beginning of each epoch.
  }
  \label{fig/sim1/vol}
\end{figure}

\begin{figure}
\vspace{-2mm}
  \centering
  \includegraphics[width=0.9\linewidth]{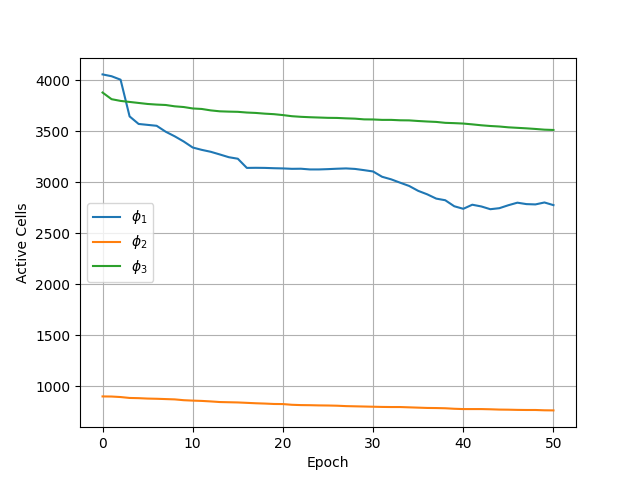}
  \vspace{-2mm}
  \caption{Number of active cells at the beginning of each epoch.}
  \label{fig/sim1/ace}
\end{figure}

\begin{figure}
  \centering
	  \subfigure[\( \s{nn}_1 \): Before re-training]{%
		\includegraphics[width=0.9\linewidth,height=100px]{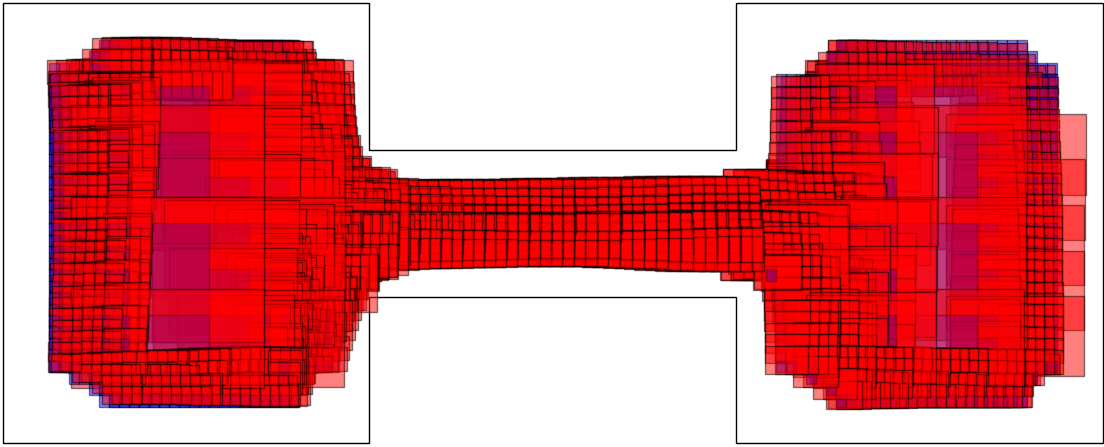}}%
  \hfill
  \subfigure[\( \s{nn}_1 \): After re-training]%
  {\includegraphics[width=0.9\linewidth,height=100px]{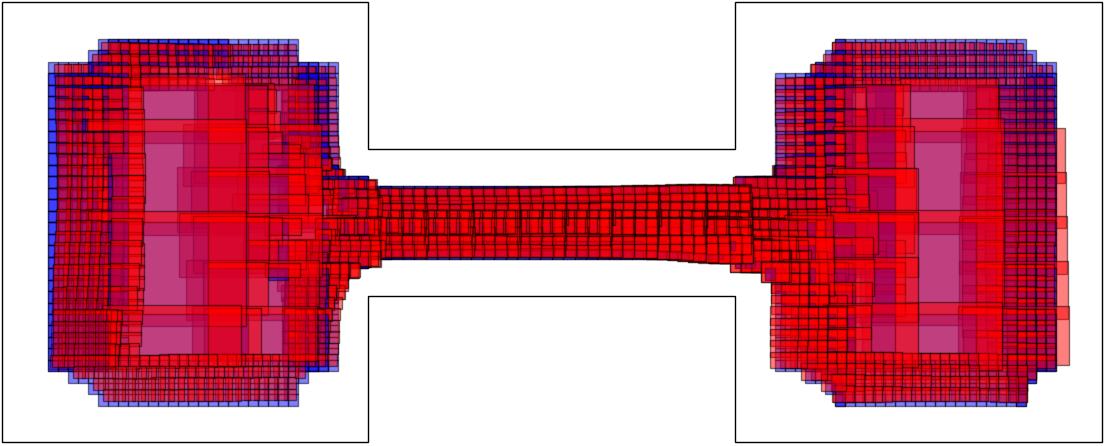}}
  \vspace{-2mm}
  \caption{
    Projections of cells (blue boxes) of partition corresponding to
    neural network controller \( \s{nn}_1 \)
    onto the \(\s{rob/pos/x}\s{rob/pos/y}\)-plane overlayed with
    the over-approximations of their forward reachable sets (red boxes)
    before and after retraining.
  }%
  \label{fig/sim1/rs/nn1}
\end{figure}

\begin{figure}
  \centering
  \subfigure[\( \s{nn}_2 \): Before re-training]%
	{\includegraphics[width=0.9\linewidth,height=100px]{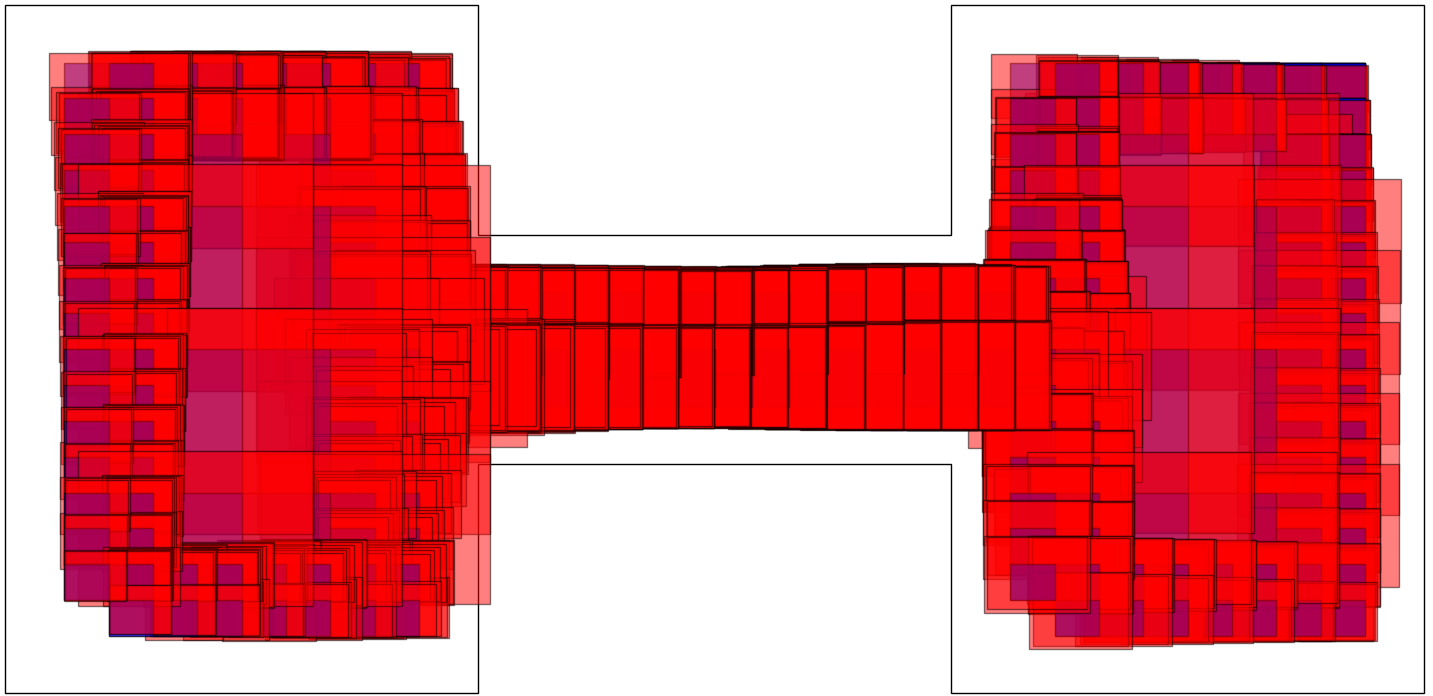}}
  \hfill
  \subfigure[\( \s{nn}_2 \): After re-training]%
	{\includegraphics[width=0.9\linewidth,height=100px]{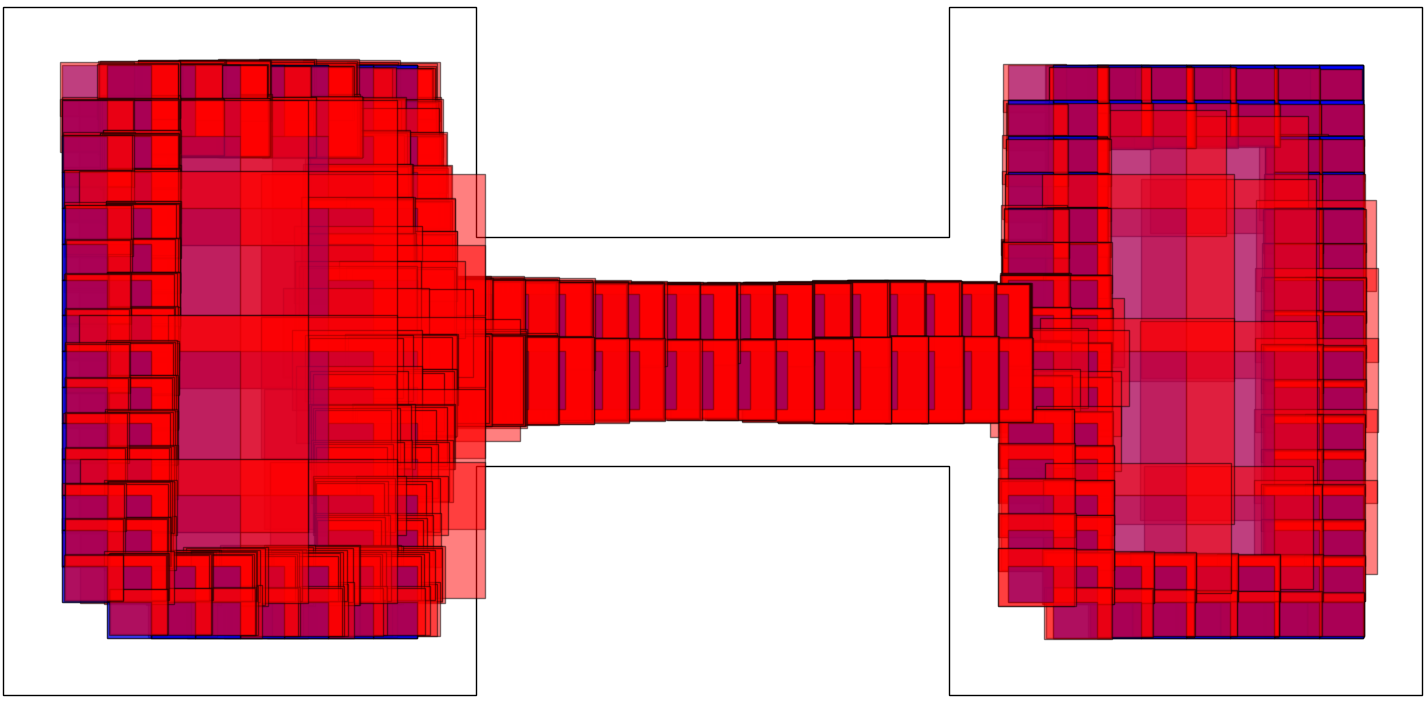}}
  \vspace{-2mm}
  \caption{
    Projections of cells (blue boxes) of partition corresponding to
    neural network controller \( \s{nn}_2 \)
    onto the \(\s{rob/pos/x}\s{rob/pos/y}\)-plane overlayed with
    the over-approximations of their forward reachable sets (red boxes)
    before and after retraining.
  }%
  \label{fig/sim1/rs/nn2}
\end{figure}

\begin{figure}
  \centering
  \subfigure[\( \s{nn}_3 \): Before re-training]%
    {\includegraphics[width=0.9\linewidth,height=100px]{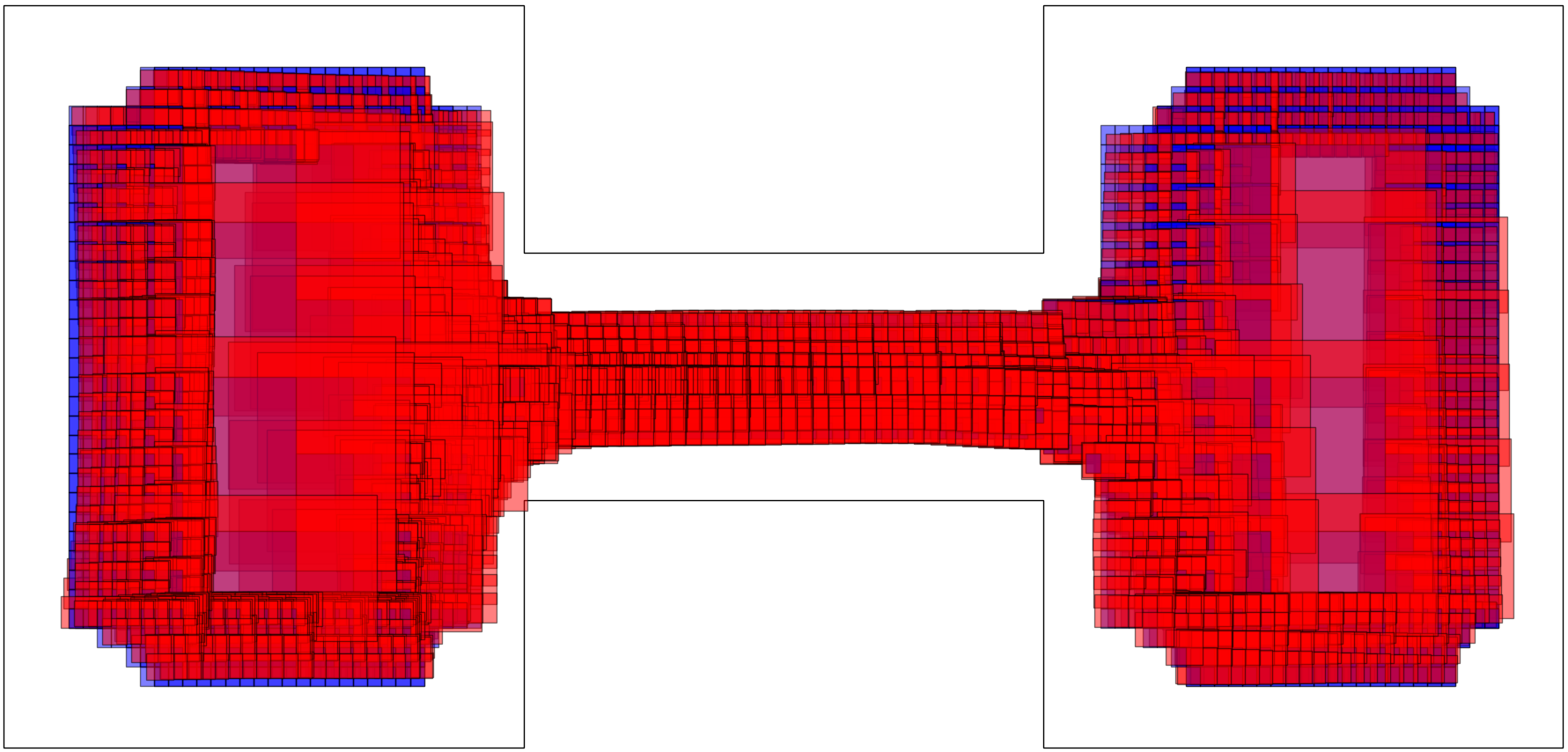}}
  \hfill
  \subfigure[\( \s{nn}_3 \): After re-training]%
    {\includegraphics[width=0.9\linewidth,height=100px]{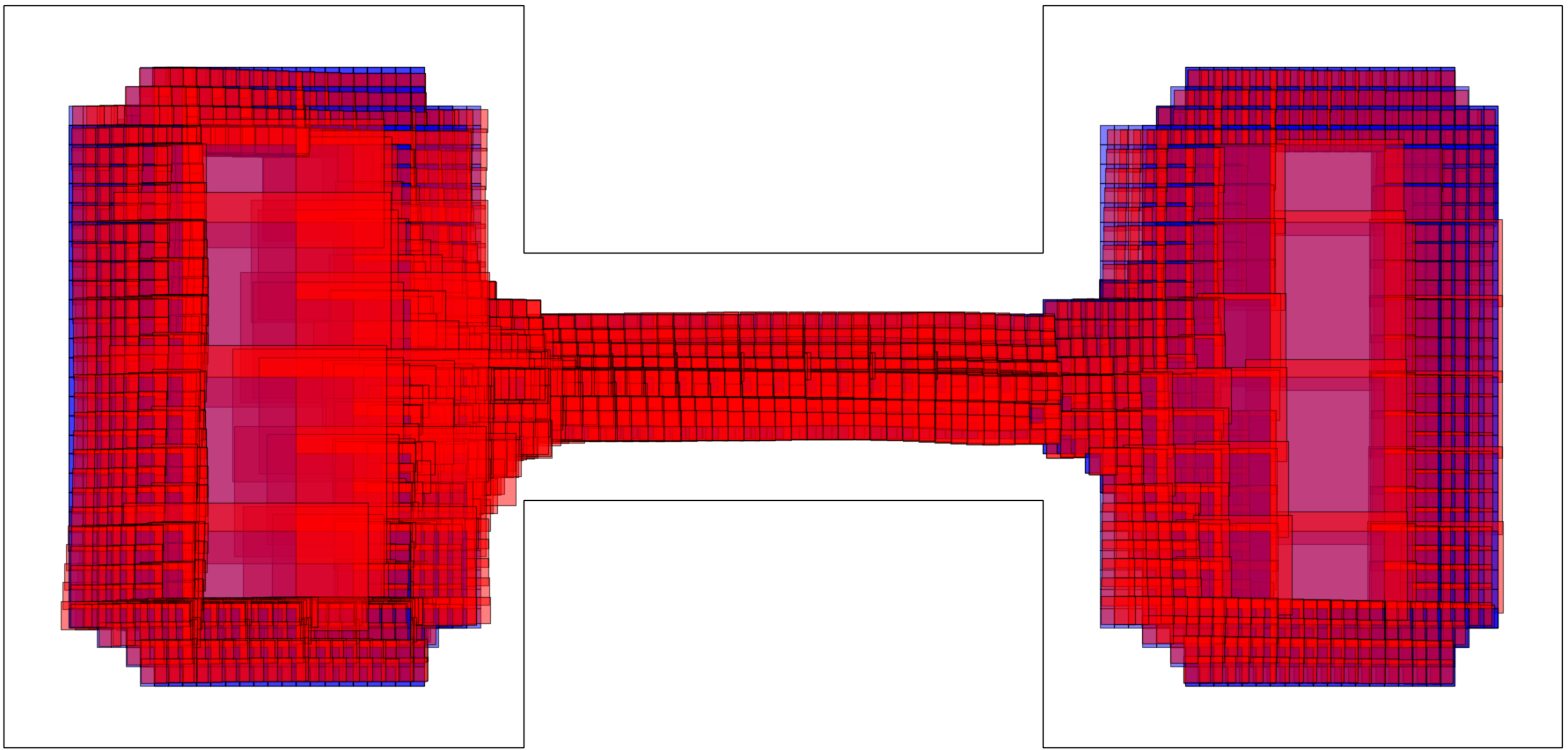}}
  \vspace{-2mm}
  \caption{
    Projections of cells (blue boxes) of partition corresponding to
    neural network controller \( \s{nn}_3 \)
    onto the \(\s{rob/pos/x}\s{rob/pos/y}\)-plane overlayed with
    the over-approximations of their forward reachable sets (red boxes)
    before and after retraining.
  }%
  \label{fig/sim1/rs/nn3}
\end{figure}

\begin{figure*}
  \centering
  \centering
  \subfigure[\label{fig/sim1/traj/1}]%
    {\includegraphics[width=0.9\linewidth]{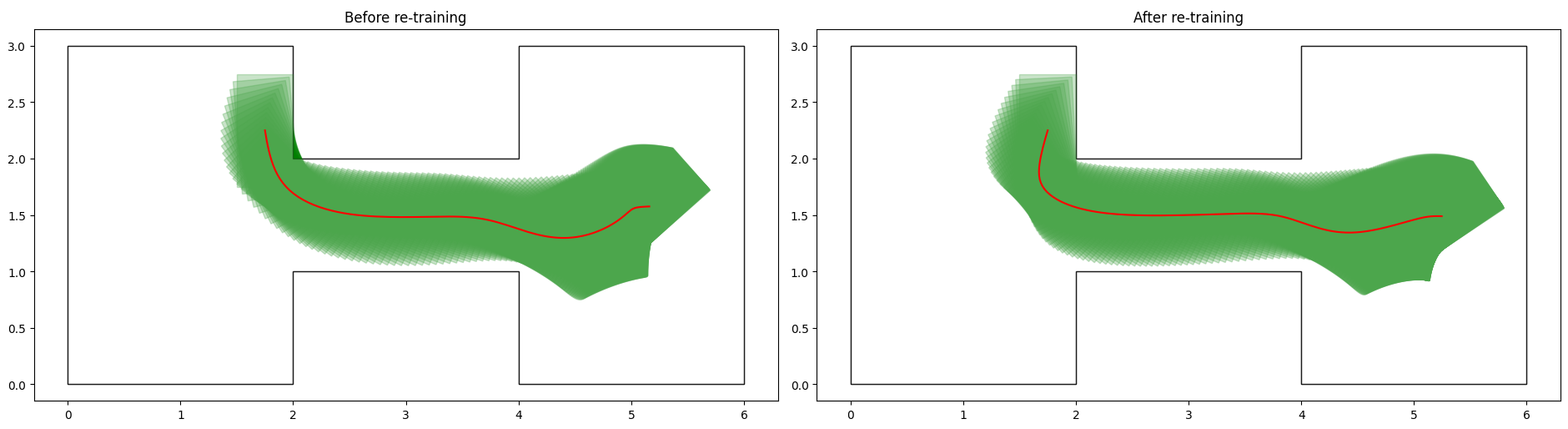}}
  \subfigure[\label{fig/sim1/traj/5}]%
	{\includegraphics[width=0.9\linewidth]{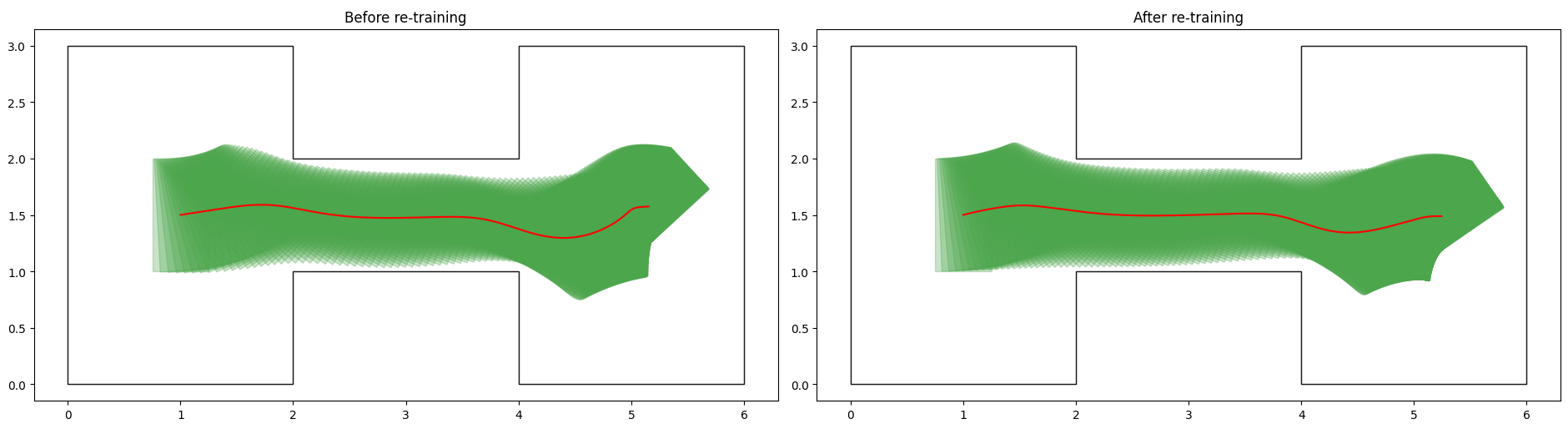}}
  \vspace{-2mm}
  \caption{
    Trajectories of the robot starting from various configurations
    before and after re-training of the controller \( \s{nn}_1 \).
    The path traversed by the robot's reference point is depicted in red
    and the body of the robot at various configurations is shown in green.
  }%
  \label{fig/sim1/traj}
\end{figure*}

\section{Conclusions}%
\label{sec/concl}

In this work,
we addressed the problem of safely steering a polygonal robot
operating inside a compact workspace
to a desired configuration using a feed-forward neural network controller
trained to avoid collisions between the robot and the workspace boundary. 
%
%
Compared to existing methods that depend strongly on the density of data points close to the boundary of the safe state space to train neural network controllers with closed-loop safety guarantees, our approach lifts such strong assumptions on the data that are hard to satisfy in practice and instead allows for {\em graceful} safety violations, i.e., of a bounded magnitude that can be spatially controlled.
%


\bibliographystyle{IEEEtran}
\bibliography{references.bib}

\begin{thebibliography}{10}
\providecommand{\url}[1]{#1}
\csname url@samestyle\endcsname
\providecommand{\newblock}{\relax}
\providecommand{\bibinfo}[2]{#2}
\providecommand{\BIBentrySTDinterwordspacing}{\spaceskip=0pt\relax}
\providecommand{\BIBentryALTinterwordstretchfactor}{4}
\providecommand{\BIBentryALTinterwordspacing}{\spaceskip=\fontdimen2\font plus
\BIBentryALTinterwordstretchfactor\fontdimen3\font minus
  \fontdimen4\font\relax}
\providecommand{\BIBforeignlanguage}[2]{{%
\expandafter\ifx\csname l@#1\endcsname\relax
\typeout{** WARNING: IEEEtran.bst: No hyphenation pattern has been}%
\typeout{** loaded for the language `#1'. Using the pattern for}%
\typeout{** the default language instead.}%
\else
\language=\csname l@#1\endcsname
\fi
#2}}
\providecommand{\BIBdecl}{\relax}
\BIBdecl

\bibitem{21113460}
A.~{Amini}, G.~{Rosman}, S.~{Karaman}, and D.~{Rus}, ``Variational end-to-end
  navigation and localization,'' in \emph{2019 International Conference on
  Robotics and Automation (ICRA)}, 2019, pp. 8958--8964.

\bibitem{47372225}
\BIBentryALTinterwordspacing
F.~Shamsfakhr and B.~S. Bigham, ``A neural network approach to navigation of a
  mobile robot and obstacle avoidance in dynamic and unknown environments,''
  \emph{SSRN Electronic Journal}, 2020. [Online]. Available:
  \url{https://doi.org/10.2139/ssrn.3619573}
\BIBentrySTDinterwordspacing

\bibitem{65705677}
\BIBentryALTinterwordspacing
T.~Nageli, J.~Alonso-Mora, A.~Domahidi, D.~Rus, and O.~Hilliges, ``Real-time
  motion planning for aerial videography with dynamic obstacle avoidance and
  viewpoint optimization,'' \emph{IEEE Robotics and Automation Letters},
  vol.~2, no.~3, pp. 1696--1703, 2017. [Online]. Available:
  \url{https://doi.org/10.1109/lra.2017.2665693}
\BIBentrySTDinterwordspacing

\bibitem{13285740}
Y.~F. {Chen}, M.~{Everett}, M.~{Liu}, and J.~P. {How}, ``Socially aware motion
  planning with deep reinforcement learning,'' in \emph{2017 IEEE/RSJ
  International Conference on Intelligent Robots and Systems (IROS)}, 2017, pp.
  1343--1350.

\bibitem{80553527}
\BIBentryALTinterwordspacing
V.~Schmuck and D.~Meredith, ``\BIBforeignlanguage{English}{Training networks
  separately on static and dynamic obstacles improves collision avoidance
  during indoor robot navigation},'' in
  \emph{\BIBforeignlanguage{English}{ESANN 2019 - Proceedings, 27th European
  Symposium on Artificial Neural Networks, Computational Intelligence and
  Machine Learning}}, M.~Verleysen, Ed.\hskip 1em plus 0.5em minus 0.4em\relax
  ESANN, Apr. 2019, pp. 655--660, european Symposium on Artificial Neural
  Networks, Computational Intelligence and Machine Learning, ESANN ; Conference
  date: 24-04-2019 Through 26-04-2019. [Online]. Available:
  \url{https://www.elen.ucl.ac.be/esann/}
\BIBentrySTDinterwordspacing

\bibitem{83024768}
T.~Perkins and A.~Barto, ``Lyapunov design for safe reinforcement learning
  control,'' \emph{J. Mach. Learn. Res.}, vol.~3, pp. 803--, 05 2003.

\bibitem{11417186}
P.~Geibel and F.~Wysotzki, ``Risk-sensitive reinforcement learning applied to
  control under constraints,'' \emph{J. Artif. Int. Res.}, vol.~24, no.~1, p.
  81–108, Jul. 2005.

\bibitem{48876638}
\BIBentryALTinterwordspacing
X.~Li and C.~Belta, ``Temporal logic guided safe reinforcement learning using
  control barrier functions,'' 2019. [Online]. Available:
  \url{https://arxiv.org/abs/1903.09885}
\BIBentrySTDinterwordspacing

\bibitem{74521314}
\BIBentryALTinterwordspacing
M.~Ohnishi, L.~Wang, G.~Notomista, and M.~Egerstedt, ``Barrier-certified
  adaptive reinforcement learning with applications to brushbot navigation,''
  \emph{IEEE Transactions on Robotics}, vol.~35, no.~5, pp. 1186--1205, 2019.
  [Online]. Available: \url{https://doi.org/10.1109/tro.2019.2920206}
\BIBentrySTDinterwordspacing

\bibitem{01373512}
\BIBentryALTinterwordspacing
R.~Cheng, G.~Orosz, R.~M. Murray, and J.~W. Burdick, ``End-to-end safe
  reinforcement learning through barrier functions for safety-critical
  continuous control tasks,'' \emph{Proceedings of the AAAI Conference on
  Artificial Intelligence}, vol.~33, pp. 3387--3395, 2019. [Online]. Available:
  \url{https://doi.org/10.1609/aaai.v33i01.33013387}
\BIBentrySTDinterwordspacing

\bibitem{18673424}
\BIBentryALTinterwordspacing
A.~Robey, H.~Hu, L.~Lindemann, H.~Zhang, D.~V. Dimarogonas, S.~Tu, and
  N.~Matni, ``Learning control barrier functions from expert demonstrations,''
  2020. [Online]. Available: \url{https://arxiv.org/abs/2004.03315}
\BIBentrySTDinterwordspacing

\bibitem{46036224}
\BIBentryALTinterwordspacing
L.~Lindemann, H.~Hu, A.~Robey, H.~Zhang, D.~V. Dimarogonas, S.~Tu, and
  N.~Matni, ``Learning hybrid control barrier functions from data,'' 2020.
  [Online]. Available: \url{https://arxiv.org/abs/2011.04112}
\BIBentrySTDinterwordspacing

\bibitem{52351685}
\BIBentryALTinterwordspacing
A.~Bajcsy, S.~Bansal, E.~Bronstein, V.~Tolani, and C.~J. Tomlin, ``An efficient
  reachability-based framework for provably safe autonomous navigation in
  unknown environments,'' 2019. [Online]. Available:
  \url{https://arxiv.org/abs/1905.00532}
\BIBentrySTDinterwordspacing

\bibitem{33804135}
\BIBentryALTinterwordspacing
J.~F. Fisac, A.~K. Akametalu, M.~N. Zeilinger, S.~Kaynama, J.~Gillula, and
  C.~J. Tomlin, ``A general safety framework for learning-based control in
  uncertain robotic systems,'' 2018. [Online]. Available:
  \url{https://arxiv.org/abs/1705.01292}
\BIBentrySTDinterwordspacing

\bibitem{73430503}
\BIBentryALTinterwordspacing
A.~Li, S.~Bansal, G.~Giovanis, V.~Tolani, C.~Tomlin, and M.~Chen, ``Generating
  robust supervision for learning-based visual navigation using hamilton-jacobi
  reachability,'' 2020. [Online]. Available:
  \url{https://arxiv.org/abs/1912.10120}
\BIBentrySTDinterwordspacing

\bibitem{75421458}
\BIBentryALTinterwordspacing
G.~Katz, C.~Barrett, D.~Dill, K.~Julian, and M.~Kochenderfer, ``Reluplex: An
  efficient smt solver for verifying deep neural networks,'' 2017. [Online].
  Available: \url{https://arxiv.org/abs/1702.01135}
\BIBentrySTDinterwordspacing

\bibitem{88568115}
\BIBentryALTinterwordspacing
W.~Ruan, X.~Huang, and M.~Kwiatkowska, ``Reachability analysis of deep neural
  networks with provable guarantees,'' 2018. [Online]. Available:
  \url{https://arxiv.org/abs/1805.02242}
\BIBentrySTDinterwordspacing

\bibitem{77475528}
\BIBentryALTinterwordspacing
S.~Dutta, S.~Jha, S.~Sanakaranarayanan, and A.~Tiwari, ``Output range analysis
  for deep neural networks,'' 2017. [Online]. Available:
  \url{https://arxiv.org/abs/1709.09130}
\BIBentrySTDinterwordspacing

\bibitem{18463457}
\BIBentryALTinterwordspacing
H.~Zhang, T.-W. Weng, P.-Y. Chen, C.-J. Hsieh, and L.~Daniel, ``Efficient
  neural network robustness certification with general activation functions,''
  2018. [Online]. Available: \url{https://arxiv.org/abs/1811.00866}
\BIBentrySTDinterwordspacing

\bibitem{68156223}
\BIBentryALTinterwordspacing
H.~Zhang, H.~Chen, C.~Xiao, S.~Gowal, R.~Stanforth, B.~Li, D.~Boning, and C.-J.
  Hsieh, ``Towards stable and efficient training of verifiably robust neural
  networks,'' 2019. [Online]. Available: \url{https://arxiv.org/abs/1906.06316}
\BIBentrySTDinterwordspacing

\bibitem{40700408}
S.~Dutta and S.~Sankaranarayanan, ``Reachability analysis for neural feedback
  systems using regressive polynomial rule inference,'' in \emph{{ACM}
  International Conference on Hybrid Systems: Computation and Control}, 04
  2019, pp. 157--168.

\bibitem{81450036}
\BIBentryALTinterwordspacing
C.~Huang, J.~Fan, W.~Li, X.~Chen, and Q.~Zhu, ``Reachnn: Reachability analysis
  of neural-network controlled systems,'' 2019. [Online]. Available:
  \url{https://arxiv.org/abs/1906.10654}
\BIBentrySTDinterwordspacing

\bibitem{24785682}
\BIBentryALTinterwordspacing
H.~Hu, M.~Fazlyab, M.~Morari, and G.~J. Pappas, ``Reach-sdp: Reachability
  analysis of closed-loop systems with neural network controllers via
  semidefinite programming,'' 2020. [Online]. Available:
  \url{https://arxiv.org/abs/2004.07876}
\BIBentrySTDinterwordspacing

\bibitem{50517380}
\BIBentryALTinterwordspacing
W.~Xiang, H.-D. Tran, and T.~T. Johnson, ``Output reachable set estimation and
  verification for multi-layer neural networks,'' 2018. [Online]. Available:
  \url{https://arxiv.org/abs/1708.03322}
\BIBentrySTDinterwordspacing

\bibitem{32783264}
\BIBentryALTinterwordspacing
W.~Xiang, H.-D. Tran, X.~Yang, and T.~T. Johnson, ``Reachable set estimation
  for neural network control systems: A simulation-guided approach,'' 2020.
  [Online]. Available: \url{https://arxiv.org/abs/2004.12273}
\BIBentrySTDinterwordspacing

\bibitem{03272408}
\BIBentryALTinterwordspacing
M.~Everett, G.~Habibi, and J.~P. How, ``Robustness analysis of neural networks
  via efficient partitioning with applications in control systems,'' 2020.
  [Online]. Available: \url{https://arxiv.org/abs/2010.00540}
\BIBentrySTDinterwordspacing

\bibitem{21101558}
D.~J. Zhu and J.~. Latombe, ``New heuristic algorithms for efficient
  hierarchical path planning,'' \emph{IEEE Transactions on Robotics and
  Automation}, vol.~7, no.~1, pp. 9--20, Feb 1991.

\bibitem{60001271}
\BIBentryALTinterwordspacing
X.~Sun and Y.~Shoukry, ``Provably correct training of neural network
  controllers using reachability analysis,'' 2021. [Online]. Available:
  \url{https://arxiv.org/abs/2102.10806}
\BIBentrySTDinterwordspacing

\bibitem{75318386}
\BIBentryALTinterwordspacing
S.~Karaman and E.~Frazzoli, ``Incremental sampling-based algorithms for optimal
  motion planning,'' 2010. [Online]. Available:
  \url{https://arxiv.org/abs/1005.0416}
\BIBentrySTDinterwordspacing

\end{thebibliography}
\end{document}